%% file: main.tex
\documentclass[final,12pt]{colt2026} % Include author names

\title[Space-Efficient Language Generation in the Limit]{Space-Efficient Language Generation in the Limit}

\usepackage{macros}
\usepackage{times}

\newif\ifarxivversion
\arxivversiontrue
\ifarxivversion
  \makeatletter
  \editor{}
  \jmlrproceedings{}{}
  \jmlrvolume{}
  \jmlrpages{}
  \jmlryear{}
  \jmlrworkshop{}
  \renewcommand*{\@titlefoot}{}
  \makeatother
\fi

\coltauthor{%
 \Name{Nicolas Flammarion} \Email{nicolas.flammarion@epfl.ch}\\
 \addr EPFL, Switzerland
 \AND
 \Name{Chirag Pabbaraju} \Email{cpabbara@stanford.edu}\\
 \addr Stanford University, USA
 \AND
 \Name{Hristo Papazov} \Email{hristo.papazov@epfl.ch}\\
 \addr EPFL, Switzerland
 \AND
 \Name{Miltiadis Stouras} \Email{miltiadis.stouras@epfl.ch}\\
 \addr EPFL, Switzerland
 \AND
 \Name{Ola Svensson} \Email{ola.svensson@epfl.ch}\\
 \addr EPFL, Switzerland
}

\begin{document}

\maketitle
% \vspace{-10mm}
\begin{abstract}
We initiate a resource-aware theory of \textit{language generation in the limit} under the minimal constraint of space efficiency. In our framework, a learner observes an adversarial positive stream from a target language $K$ and must eventually output a hallucination-free hypothesis language $L \subseteq K$ while omitting at most $\Delta$ strings of $K$. We focus on $\mathcal{C}_{s,k}$, the collection of languages recognized by DFAs with at most $s$ states over an alphabet of size $k$, as the natural hypothesis class for memory-bounded learners. In the exponential-space regime, we prove that a learner can exactly identify the target $K$. Under a stricter memory budget, we characterize the strongest possible generation guarantees. In particular, we present a streaming algorithm using $\mathrm{poly}(s,k)$ space that converges to a hypothesis with generation gap $\Delta = O(k^{2s-2})$. Moreover, the learned hypothesis captures every string in $K$ of length at least $2s-1$. We complement this result with a near-matching lower bound through a reduction from a standard communication complexity problem. Specifically, achieving generation gap $\Delta \le k^{(1-\varepsilon)s}$ requires $k^{\Omega(\varepsilon s)}$ memory. Together, these results reveal a sharp transition between polynomial-space generation and exponential-space exact identification.
\end{abstract}

\begin{keywords}
  Language Generation in the Limit, Finite-State Automata, Computational Efficiency
\end{keywords}
\input{Sections/Introduction}
\input{Sections/Formal_Framework}
\input{Sections/Upper_Bounds}
\input{Sections/Lower_Bounds}
\input{Sections/Conclusion}

\newpage
% Acknowledgments---will not appear in anonymized version
\acks{
This work was partially funded by the Swiss National Science Foundation, grant number 212111.
Chirag Pabbaraju is supported by Gregory Valiant's and Moses Charikar's Simons Investigator Awards, and a Google PhD Fellowship.
Miltiadis Stouras and Ola Svensson are supported by the Swiss State Secretariat for Education, Research and Innovation (SERI) under contract number MB22.00054.
}

\bibliography{references}

%%%%%%%%%%%%%%%%%%%%%%%%%%%%%%%%%%%%%%%%%%%%%%%%%%
%%%%%%%%%%%%%%%%%%%%%%%%%%%%%%%%%%%%%%%%%%%%%%%%%%
\newpage 
\appendix
\crefalias{section}{appendix} % uncomment if you are using cleveref

\section*{Organization of the Appendix}
\begin{itemize}[label=\ding{226}, topsep=5pt, left=10pt, itemsep=0pt, labelsep=10pt]
    \item Appendix~\ref{app:communicative} motivates the focus on finite regular hypothesis spaces by developing the connection to communicative languages induced by interaction among space-bounded agents.
    \item Appendix~\ref{sec:appendix-dfa-facts} provides standard results on DFAs.
    \item Appendix~\ref{sec:appendix-upper-bounds} contains deferred proofs from \Cref{sec:upper-bounds}.
    %presents the streaming algorithm, proves convergence under $\mathrm{poly}(s,k)$ space, and establishes the bounded-gap guarantee $\Delta = O(k^{2s-2})$, including the stronger property that the algorithm captures every string in $K$ of length at least $2s-1$.
    \item Appendix~\ref{sec:appendix-lower-bounds} contains deferred proofs from \Cref{sec:lower-bounds}.
    %gives the near-matching memory lower bound via a reduction from a standard communication-complexity problem, showing that any algorithm achieving generation gap $\Delta \le k^{(1-\varepsilon)s}$ requires $k^{\Omega(\varepsilon s)}$ memory.
\end{itemize}

\input{Appendix_Sections/Communicative_Languages}
\input{Appendix_Sections/DFA_Stuff}

\input{Appendix_Sections/Appendix_Upper_Bounds}
\input{Appendix_Sections/Appendix_Lower_Bounds}

\end{document}

%% file: Sections/Introduction.tex
\section{Introduction}
\label{sec:introduction}
Large language models (LLMs) have been shown to generate novel and grammatically well-formed text even after training on exclusively \textit{positive} natural-language examples \citep{radford2019language, brown2020language, mahowald2024dissociating}. This remarkable empirical success revives a classical question from computational learning theory: What kind of language acquisition becomes possible when a learner only observes valid strings from the target language with no explicit negative feedback?

\textbf{Identification in the Limit.} A canonical formalization of this positive-only setting comes from Gold's work on learning in the limit. Motivated by the now-contentious\footnote{Consider \citep{marcus1993negative} for arguments in favor of language acquisition solely from positive examples and see \citep{chouinard2003adult} for arguments for the presence of negative examples in parent-infant communication.} claim from psycholinguistics that infants learn the grammar of a language solely from positive examples, \citet{gold1967language} developed the learning-in-the-limit framework as a minimalist formalization of language learning from positive data. Gold’s model fixes (i) a representation system for languages, (ii) a target language, and (iii) an infinite stream that eventually lists every string in the target language. After each received valid string from this language presentation, a learner outputs a hypothesis language. Gold defined a language family as \textit{identifiable in the limit} when a learner eventually stabilizes on a hypothesis that matches the ground-truth exactly, regardless of the target language and the presentation order. Unfortunately, under this learning criterion, \citet{gold1967language} proved that only highly restricted\footnote{See \citep[Theorem 1]{angluin1980inductive} for a precise characterization of identifiable collections in the limit.} language collections are learnable. In particular, one cannot even learn the class of regular languages. These impossibility results largely redirected mainstream learning theory toward distributional, sample-complexity, and efficiency guarantees, crystallized in PAC learning \citep{valiant1984theory,kearns1994introduction}.
% while identification in the limit continued mainly within grammatical inference \citep{jain1999systems,de2010grammatical}.

\textbf{Generation in the Limit.} Only recently, \citet{kleinberg2024language} revitalized Gold’s positive presentation framework by proposing a different language-acquisition criterion that fits language modeling more closely. Specifically, the learner still receives an arbitrary enumeration of the target language, but success no longer requires eventual equality between hypothesis and target. Instead, after some finite time, the learner must begin to generate infinitely many unseen strings from the target language. Put informally, sufficient exposure to positive data should make hallucinations disappear while preserving an infinite learned subset of the target language. This shift replaces identification with generation by relaxing the objective from equality to containment. Surprisingly, the relaxed objective substantially alters the theory: Now, under oracle access to membership queries for the enumerated language family, generation in the limit becomes achievable for every countable collection of languages.\footnote{Of course, if we insist on computable membership queries, then the scope of the generatable collections decreases.}
\citet{kleinberg2024language}'s positive result has since inspired a growing line of follow-up work examining different aspects of generation/learning in the limit \citep{kalavasis2024characterizations, li2024generation, charikar2024exploring, papazov2025learning, charikar2025pareto, peale2025representative, raman2025generation, hanneke2025union, charikar2025characterization}. Interestingly, the current generation-in-the-limit literature largely treats computation as free. As a result, generation guarantees rest on models that place no bounds on computation and, in particular, allow unbounded memory. Such assumptions diverge from practice, where human learners operate without external storage and language models train and run under explicit memory budgets.

\textbf{Space-Efficient Generation.} 
Motivated by this explanatory gap, our paper initiates a resource-aware study of generation in the limit by imposing the minimal computational restriction of space efficiency. Under our proposed framework, (i) the learner receives a stream of positive examples through an online interface, (ii) processes each target string symbol-by-symbol while operating within a strict memory budget, and (iii) after each processed example, outputs a hypothesis representation that must eventually generate a subset of the target language. Now, the requirement of bounded memory implies that our learning algorithms can only occupy finitely many internal configurations. Therefore, membership verification for the streamed examples can only range over regular languages, and only over a finite set of such languages. Accordingly, space-efficient generation only admits a meaningful formulation over finite collections of regular languages. Despite this restriction, regular-language generation remains practically significant. Specifically, in \Cref{app:communicative}, we argue that the collection of \textit{communicative languages}, which arise from the interaction of space-bounded agents such as humans \citep{miller1963finitary}, forms a strict subset of regular languages.

Concurrently and independently, \citet{kleinberg2026languagegenerationlimitbounded} also study generation in the limit under bounded-memory restrictions. Their results are incomparable to ours: they consider broader language collections with coarser memory models, where the learner can only retain a limited window or buffer of past examples. In contrast, we impose an explicit bit-space budget on the learners and focus on regular languages (\(C_{s,k}\)). This lets us prove quantitative space--breadth tradeoffs, including tight bounds
on the generation gap achievable with polynomial space.

\textbf{Overview and Contributions.}
The regularity arguments above justify restricting attention to regular hypothesis spaces parameterized by automaton complexity. For a fixed alphabet $\Sigma$ with $|\Sigma|=k$ and a state bound $s$, let $\mathcal{C}_{s,k}$ denote the collection of languages recognizable by deterministic finite automata (DFAs) over $\Sigma$ with at most $s$ states. The remainder of the paper characterizes space-efficient generation within $\mathcal{C}_{s,k}$. Upper bounds give explicit streaming learners operating in $\p(s,k)$ space, while lower bounds show that stronger generation guarantees require exponential memory. These results expose a sharp tradeoff between memory and generative breadth, yielding a resource-sensitive theory of generation in the limit.

\textbf{Roadmap.}
Section~\ref{sec:framework-main-results} formalizes the space-efficient generation model and describes our main results. Section~\ref{sec:upper-bounds} presents a space-efficient generation algorithm for $\mathcal{C}_{s,k}$.
Section~\ref{sec:lower-bounds} proves matching lower bounds via communication complexity reductions.
Section~\ref{sec:conclusion} concludes with implications for resource-bounded language acquisition and open directions.

%% file: Sections/Formal_Framework.tex
\section{Space-Efficiency Framework and Main Results} \label{sec:framework-main-results}

As described in \Cref{sec:introduction}, we study the problem of space-efficient generation in the limit where an adversary enumerates $w_1, w_2, w_3, \dots$ and presents a target language $K \in \SS$ in a \textit{streaming} fashion. Each string $w_t$ arrives one symbol at a time, with a designated delimiter separating consecutive strings. After each symbol, the learning algorithm $\A$ performs computation under a polynomial space budget, and after each processed string $w_t \in K$, $\A$ outputs a hypothesis representation of a regular language. 

Concretely, for the rest of the paper, we fix an alphabet $\Sigma$ of size $k$ and let $\mathcal{C}_{s,k}$ denote the collection of all regular languages over $\Sigma$ recognizable by DFAs of at most $s$ states. As a quick reminder, we recall the textbook definition of a DFA \citep{sipser1996introduction,hopcroft2001introduction}.
\begin{definition}[Deterministic Finite-State Automaton] \label{def:dfa}
    A DFA $A$ over the finite alphabet $\Sigma$ is a 5-tuple $A = (Q, \Sigma,\d, q_0, F)$, where $Q$ is a finite set of states, $\d : Q \times \Sigma \to Q$ is a transition function, $q_0 \in Q$ is the initial state, and $F \subseteq Q$ is a set of accepting (final) states. We extend the transition function $\d$ to act on $\SS$ by setting $\d(q,\eps)=q$ (where $\eps$ is the empty string) and recursively defining $\d(q, \s w) \coloneqq \d(\d(q, \s), w)$, for every $q \in Q, \sigma \in \Sigma$ and $w \in \SS$. We use the notation $L(A)$ to denote the regular language accepted by the DFA $A$, i.e., $L(A)=\{w \in \SS:\delta(q_0,w) \in F\}$, and we define the size of the DFA $A$ as $|A| = |Q|$.
\end{definition}
We now state our main definition of space-efficient language generation in the limit.
\begin{definition}[Space-Efficient Generation in the Limit]
    \label{def:space-efficient-generation-in-the-limit}
    A space-efficient algorithm $\mathcal{A}$ generates in the limit from $\mathcal{C}_{s,k}$ with a generation gap $\Delta_\mathcal{A}(s,k)$ if (i) $\mathcal{A}$ uses at most $\p(s,k)$ bits of working memory, and (ii) for any target regular language $K \in \mathcal{C}_{s,k}$ and any surjective enumeration $w : \N \surj K$ presented to $\mathcal{A}$ in a streaming fashion, there exists a finite time $t^\star$ such that for all $t \ge t^\star$, $\mathcal{A}$ outputs a representation of a DFA $\mathcal{A}(t)$\footnote{We use $\A(t)$ as a shorthand for $\A(w_1, \dots, w_t)$.} satisfying $L(\mathcal{A}(t)) \subseteq K$
    \footnote{One can also formulate the objective without the hard no-hallucination requirement by asking directly for a bounded symmetric difference $|L(\A(t)) \triangle K| \le \Delta_\mathcal{A}(s,k)$. The upper and lower bounds in this paper remain the same under that relaxed formulation as we show in \Cref{sec:upper-bounds,sec:lower-bounds}.}
    and $|K \setminus L(\mathcal{A}(t))| \le \Delta_\mathcal{A}(s,k)$.
\end{definition}
The literature refers to the requirement of outputting a language representation satisfying $L(\A(t)) \subseteq K$ as \textit{index-based generation in the limit} \citep{kleinberg2025density}. While index-based generation with an unbounded generation gap admits a trivial solution, for example by repeatedly outputting the singleton language ${w}$ for some short string $w \in K$, such a strategy fails to approximate $K$ in any rich or informative sense. In particular, constructing an infinite sublanguage of $K$ already poses a nontrivial challenge. This paper studies a substantially stronger objective: generation in the limit under a bounded generation gap, where the output language  differs from $K$ on only finitely many strings, thereby capturing almost the entire target language.

Without an \textit{a priori} bound $s$ on the size of a DFA recognizing the target language, bounded-gap generation in the limit remains impossible, even without memory constraints. More formally, \citet[Theorem 3.10]{kalavasis2024characterizations} show that any collection admitting bounded-gap\footnote{\citet{kalavasis2024characterizations} refer to this notion as generation with ``approximate breadth''.} generation in the limit must satisfy the so-called \textit{Weak Angluin's Condition}. In particular, regular languages violate the weak Angluin's condition. Consequently, bounded-gap generation in the limit fails over the full class of regular languages (without any \textit{a priori} state bound), even with unlimited memory.

Interestingly, \citet{angluin1980inductive} proved that any finite collection of languages allows identification in the limit. Hence, without the polynomial-space requirement, we can completely remove the generation gap and \textit{identify} $\mathcal{C}_{s,k}$ in the limit. We describe such an algorithm in \Cref{sec:upper-bounds}; here, we simply note that identification crucially relies on using exponential space.
Then, how does the picture change with polynomial space constraints? Perhaps surprisingly, our first result shows that there exists a space-efficient algorithm that generates from $\mathcal{C}_{s,k}$ in the limit with a bounded generation gap.
\vspace{-2mm}
\begin{theorem}[Space-Efficient Generation in the Limit]
    \label{thm:intro-main-upper-bound}
    There exists a space-efficient learning algorithm $\mathcal{A}$ that generates in the limit from $\mathcal{C}_{s,k}$ with a generation gap $\Delta_\mathcal{A}(s,k) \le O(k^{2s-2})$.
\end{theorem}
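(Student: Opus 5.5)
The plan is to exploit the fact that $\mathrm{poly}(s,k)$ bits suffice to store a constant number of DFAs with at most $s$ states ($O(sk\log s)$ bits each), and to run operations like emptiness, inclusion and equivalence in polynomial space. The natural route to space efficiency is therefore an \emph{enumerative} search over the finite collection $\mathcal{C}_{s,k}$ that keeps only a champion DFA $A$, a challenger DFA $C$, and $O(1)$ counters and pointers, never an explicit list of eliminated candidates. These tests run on product automata with at most $s^2$ states. For inclusion and equivalence one runs a graph search on the product graph: storing its visited set takes $s^2$ bits, and the search needs time, not memory.

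The structural fact I will prove first is the Moore-type bound. If $A,B$ each have at most $s$ states and $L(A)\neq L(B)$, then some string of length at most $2s-2$ lies in $L(A)\triangle L(B)$. This follows from the partition-refinement argument on the disjoint union automaton with $2s$ states. Dually, every $w\in L(A)$ with $|w|\ge 2s-1$ visits some pair of states of the product automaton $A\times B$ twice, which gives a pumping decomposition $w=xyz$ with $|xy|\le 2s-1$ that is valid simultaneously in both automata.

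These two facts explain the target gap: the number of strings of length at most $2s-2$ is $\sum_{i\le 2s-2}k^i=O(k^{2s-2})$. So it suffices to produce a hypothesis $H$ with $L(H)\subseteq K$ that contains every $w\in K$ with $|w|\ge 2s-1$. The output will be a DFA of the form $L(A)\cap\Sigma^{\ge 2s-1}$, possibly unioned with short strings certified by the stream. This needs only $O(s^2)$ extra states, so its representation is still polynomial.

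The algorithm, in order of the steps I would carry out:
\begin{enumerate}
\item Cycle a pointer over a fixed enumeration of all DFAs with at most $s$ states. Whenever a streamed string $w_t\notin L(A)$ arrives, advance the champion $A$ to the next candidate. Every candidate with $K\not\subseteq L(A)$ is then rejected whenever its witness appears. Because the adversary need not repeat strings, a rejected candidate may be revisited, so this stage alone only guarantees that eventually $A$ is a superset of $K$.
\item To push $A$ down towards $K$, run a challenger pointer $C$ over candidates with $L(C)\cap\Sigma^{\ge 2s-1}\subsetneq L(A)\cap\Sigma^{\ge 2s-1}$. The challenger replaces the champion once it has survived long enough in a checkable sense. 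The check I intend to use is the pumping lemma: for the product $A\times C$, a long string of $L(A)\setminus L(C)$ can be chosen as a pumpable witness. Its infinitely many pumped variants must eventually appear in the stream if they lie in $K$.
\item Output $L(A)\cap\Sigma^{\ge 2s-1}$.
\end{enumerate}

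The correctness argument would show two things. First, the champion eventually stabilizes on a DFA whose long part equals $K\cap\Sigma^{\ge 2s-1}$; any strictly smaller long part is refuted infinitely often, and any strictly larger one is eventually beaten by a challenger. Second, short disagreements cost at most $O(k^{2s-2})$ omissions.

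The main obstacle is step 2. Without memory of past rejections, I need to certify that a challenger is not refuted by a witness that already passed and will never reappear. My first attempt is to make rejection rely only on \emph{long} strings. By the product-automaton pumping argument, if $L(C)$ misses a long string of $K$, it misses infinitely many strings of $K$ that the adversary must eventually present. This is why I restrict everything to lengths at least $2s-1$ and concede the short strings to the gap. I then need a potential argument showing that the pair $(A,C)$ changes only finitely often once all relevant long witnesses recur.
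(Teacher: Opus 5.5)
Your proposal has a genuine gap: it does not yet specify a convergent algorithm. Step 2, which you flag yourself, is where the whole difficulty lies, and no rule of the form ``the challenger replaces the champion once it has survived long enough'' can work.

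\textbf{Step 2 has no workable promotion rule.} Whether a challenger is ever refuted depends on strings that may arrive arbitrarily late. With polynomial memory you also cannot record which of the infinitely many pumped witnesses have already appeared. Concretely, suppose $\emptyset\neq L(C)\subseteq K$ and $K\setminus L(C)$ is infinite. Then $C$ is a legitimate challenger to a correct champion. Yet the enumeration can repeat strings of $L(C)$ until $C$ is promoted, and only then present a refuting string. It can do this every time such a challenger is under test, while listing all of $K$ in between. Your output then misses infinitely many strings of $K$ at infinitely many times.

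Step 1 is also weaker than you state. A cycling pointer only stabilizes on some $A$ with $|K\setminus L(A)|<\infty$, because the finitely many witnesses may all lie in the past. It does not give $K\subseteq L(A)$, and $L(A)\setminus K$ may be infinite.

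\textbf{The missing idea.} Fix the search order in advance so that no challenger is ever needed. The paper walks monotonically, never wrapping around, through a topological order of the strict partial order $A\prec_2 A'$ iff $|L(A)\setminus L(A')|<\infty$ and $|L(A)\setminus L(A')|<|L(A')\setminus L(A)|$. It produces the $j$-th DFA of this order in polynomial space via a Savitch-style rank computation. The pointer then never passes the target's index $i^\star$. Its limit $j\le i^\star$ has $|K\setminus L(A_j)|<\infty$, since otherwise it is refuted infinitely often. Finally, $A_{i^\star}\not\prec_2 A_j$ forces $|L(A_j)\setminus K|\le |K\setminus L(A_j)|<\infty$.

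\textbf{Your length argument also fails.} The product $A\times B$ has up to $s^2$ states, so a string of length $2s-1$ need not revisit a product pair. The consequence you rely on in your last paragraph is also false: a long string in $K\setminus L(C)$ does not force infinitely many. Take $s=4$ over $\{a,b\}$. Let $K$ be the strings with no factor $aaa$ whose final run of $a$'s has length exactly $2$, and let $L(C)=\{w:|w|_b\neq 2\}$; each needs four states. Then $K\setminus L(C)=\{a^iba^jbaa:0\le i,j\le 2\}$ is finite. It contains $aabaabaa$, of length $8>2s-2$, whose product path visits nine distinct pairs.

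Your Moore-type bound is a different statement: it says some short distinguishing witness exists, not that every string of a finite difference is short. What is true, and what the paper uses, is the hyper-equivalence lemma: if $|L(A)\triangle L(B)|<\infty$, then $A$ and $B$ agree on all strings of length at least $2s-1$. This lemma genuinely needs two-sided finiteness, as the example shows. It is proved via the ultrametric $d_M$ on the $2s$-state disjoint union plus partition refinement, not by pumping in the product. This is why the $\prec_2$ search, which delivers a finite symmetric difference, must come first. Outputting $L(A_j)\cap\Sigma^{\ge 2s-1}$ then gives the $O(k^{2s-2})$ gap.
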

\vspace{-2mm}
In fact, the algorithm satisfies a stronger guarantee: The output language omits only target strings of length at most $2s-2$. At a high-level, our algorithm traverses $\mathcal{C}_{s,k}$ according to a predefined topological order and outputs the first language not proven inconsistent with the most recently observed input string. At the next input, the learner continues traversing $\mathcal{C}_{s,k}$ starting from the previously outputted language. The topological order ensures that the algorithm converges to a language with a finite symmetric difference with the target language. We make the traversal space-efficient by using a recursion technique inspired by Savitch's famous theorem in complexity theory \citep{savitch1970relationships}.
Finally, to achieve generation in the limit while minimizing the generation gap, we invoke results from the automata-minimization literature \citep{gawrychowski2011minimising} that bound the size of the finite symmetric difference between the learner's hypothesis and the target.

Our space-efficient algorithm nonetheless incurs a nonzero generation gap of $O(k^{{2s-2}})$. This bound raises a natural question: Does polynomial-space learning permit a zero generation gap, or, equivalently, identification in the limit over $\C_{s,k}$? The next theorem rules out this possibility. In fact, we prove something much stronger: Any learner that operates with sub-exponential space must incur an exponential generation gap.
\begin{theorem}[Space--Breadth
Tradeoff]
    \label{thm:intro-main-lower-bound}
    For any $\varepsilon > 0$, any algorithm $\mathcal{A}$ generating in the limit from $\mathcal{C}_{s,k}$ with a generation gap $\Delta_{\mathcal{A}}(s,k) \leq k^{(1-\varepsilon)s}$ must use $k^{\Omega(\varepsilon s)}$ memory bits. Moreover, any algorithm $\A$ achieving the relaxed condition $|L(\A(t^\star)) \triangle K| \leq k ^{(1-\eps)s}$ must still use $k^{\Omega(\varepsilon s)}$ bits.
\end{theorem}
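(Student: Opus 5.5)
We reduce from one-way communication complexity of \textsc{Index}. Alice holds $x \in \{0,1\}^N$, Bob holds $i \in [N]$, and any one-way protocol in which Bob outputs $x_i$ with probability at least $2/3$ needs $\Omega(N)$ bits. We will show that a learner $\A$ with memory $M$ that generates in the limit from $\C_{s,k}$ with gap $\Delta \le k^{(1-\eps)s}$ yields such a protocol with $O(M)$ bits, for $N = k^{\Omega(\eps s)}$. Since $\A$ is a streaming algorithm, Alice can feed it the first part of an enumeration and send its memory state; Bob continues the stream and reads off the answer from $\A$'s hypothesis.

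\textbf{Encoding.} Set $m = \lceil \eps s/c\rceil$ for a suitable constant $c$, and index $[N]$ by strings $u \in \Sigma^m$, so $N = k^m$. For each $x$, Alice's target will be
\[
K_x = \{\,u v : u \in \Sigma^m,\ x_u = 1,\ v \in \Sigma^{\ell}\Sigma^*\,\}
\]
or a variant of it. Here $\ell \approx (1-\eps/2)s$ is a length padding whose purpose is that each "present" block contributes at least $k^{\ell'}$ strings of every length $\ge m+\ell$. The point of the padding is that omitting a whole block costs infinitely many strings, hence more than $\Delta$, while any fixed finite set of omissions is allowed. The first thing to check is that $K_x$ is recognized by a DFA with at most $s$ states. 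A trie over $\Sigma^m$ needs about $k^m$ states, which is too many. So the encoding must instead fold the index into a suffix: for example, $K_x = \{ v u : |v|\ge \ell,\ u\in\Sigma^m,\ \text{some condition}\}$, checked with $O(m)$ states by a counter, with $x$ stored in the acceptance pattern. Alternatively, I would use only $2^{\Theta(\eps s)}$-many indices, with $\Theta(\eps s \log k)$ bits encoded in the DFA's transition structure; this still gives $N = k^{\Omega(\eps s)}$ whenever $\log$ of the number of distinct DFAs of $O(\eps s)$ states is $\Omega(\eps s\log k)$. Concretely, I would pick a family $\{K_x\}$ of $2^{N}$ languages in $\C_{s,k}$ such that any two differ on infinitely many strings. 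That $N = k^{\Omega(\eps s)}$ is achievable follows from counting DFAs on $\Theta(\eps s)$ extra states, and it is the step requiring care.

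\textbf{Protocol.} Alice simulates $\A$ on an enumeration prefix of $K_x$ listing all strings of length $\le$ some bound. Bob, holding $i$, needs to distinguish the case where the target contains the $i$-th block from the case where it does not. Here is the standard trick: Bob appends strings so that the true target becomes $K_x \cup B_i$ or a union that stays in $\C_{s,k}$. Then $\A$, continued on a full enumeration, must converge to a hypothesis $H$ with $H\subseteq K$ and $|K\setminus H|\le \Delta$. Bob then checks whether $H$ contains many strings of block $i$ of some length $L$ with $k^{L-\cdot}>\Delta$. Bob cannot run forever, so I would instead use the non-uniform version. Fix the sequence of memory states: if two inputs $x\ne x'$ lead to the same memory state after Alice's part, then the continuation streams enumerating the common part yield the same limit behaviour. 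Bob's continuation enumerates $K_x\cap K_{x'}$ -relevant strings, and we choose $K_x \subseteq K_{x'}$ pairs so that the continuation is a valid enumeration of both. This contradicts the gap bound, since both $L\subseteq K_x$ and $|K_{x'}\setminus L|\le\Delta$ would be required with $|K_{x'}\setminus K_x| > \Delta$. This is a deterministic fooling argument: $2^N$ inputs must map to distinct states, so $M \ge N$. Randomization, or using \textsc{Index} proper, only matters if the family is not closed under the needed unions; I would first try the pure fooling-set version with a monotone family (sets $x$ under inclusion, $K_x\subseteq K_{x'}$ when $x\subseteq x'$). For incomparable $x, x'$, compare both with $x\cup x'$: the continuation enumerating $K_{x\cup x'}$ starting from identical memory gives the same limit hypothesis $H$. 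Then $H\subseteq K_{x\cup x'}$, which is fine, but the learner cannot distinguish the two runs. To fix this, order the stream as "Alice's part is a finite prefix not determining $x$": we need the limit to depend only on states. The cleanest route is to take an \textsc{Index}-style union: Bob streams $K_{\bar e_i}$, the language with every block except $i$. The target is then $K_{x}\cup K_{\bar e_i}$, which contains block $i$ iff $x_i=1$, and the gap forces the hypothesis to reveal this.

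\textbf{Main obstacle.} The hard step is constructing $\{K_x\}$ inside $\C_{s,k}$ so that three requirements hold simultaneously. First, there must be $k^{\Omega(\eps s)}$ independent bits. Second, the family must be closed under the unions used. Third, each bit must control more than $k^{(1-\eps)s}$ strings. I would realize this with $s$ states split as a chain of $(1-\eps)s$-ish padding states times a small gadget of $O(\eps s)$ states whose accepting set encodes $x$. I expect the union-closure requirement to force an encoding via accepting-state subsets of a fixed automaton, for which unions correspond to unions of accepting sets. With $\Theta(\eps s)$ gadget states, that would give only $\Theta(\eps s)$ bits, not $k^{\Omega(\eps s)}$. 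So I would instead take the product of a fixed "address-reading" structure and use the fact that distinct transition functions on $\eps s$ states give $k^{\Omega(\eps s)}$ bits, settling for the randomized \textsc{Index} reduction if union-closure fails. The relaxed symmetric-difference claim then follows by the same argument, since the distinguishing sets have size exceeding $2\Delta$.
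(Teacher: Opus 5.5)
There is a genuine gap, and it sits in what you call the main obstacle. Your plan asks for a family $\{K_x\}\subseteq\mathcal{C}_{s,k}$ indexed by Alice's entire input and carrying $N=k^{\Omega(\varepsilon s)}$ independent bits. No such family exists. $\mathcal{C}_{s,k}$ contains at most $s\cdot 2^{s}s^{ks}$ languages, so any injectively indexed family carries only $O(ks\log s)$ bits. Your justifications confuse a number of languages with its logarithm. Counting transition functions on $\Theta(\varepsilon s)$ states gives about $k\varepsilon s\log(\varepsilon s)$ bits, not $k^{\Omega(\varepsilon s)}$. Likewise, storing $x$ in the acceptance pattern of an $O(m)$-state automaton gives $O(m)$ bits, not $k^{m}$, and folding the address into a suffix does not change this. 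So any fooling argument in which $x$ determines a target in the class yields at most a polynomial memory bound. Neither union-closure nor randomized \textsc{Index} is the relevant issue.

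The missing idea is that $x$ should live only in the finite prefix Alice streams. Only the final targets, which depend on $(i,x_i)$ alone, need to lie in $\mathcal{C}_{s,k}$. Your ``cleanest route'' sentence (target $=$ all blocks, or all blocks but $i$) has this shape, but you did not see that it makes $K_x\in\mathcal{C}_{s,k}$ unnecessary. It also breaks with your infinite blocks $u\,\Sigma^{\ell}\Sigma^*$: Alice can stream only finitely many strings of each block. So when $x_i=1$ the stream enumerates $K_{\bar e_i}$ plus a finite piece of block $i$, not $K_x\cup K_{\bar e_i}$, and ``omitting a block costs infinitely many strings'' no longer applies.

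The paper's construction fixes both issues:
\begin{itemize}
\item It uses finite blocks $L_j=\{1\circ\mathrm{enc}_k(j)\circ u : u\in\Sigma^{\ell}\}$, each with a $p=1+\log_k n$ symbol address and exactly $k^{\ell}>\Delta$ strings, and takes $x\in[k]^n$.
\item Alice streams the whole blocks $(j,q)$ with $q\neq x_j$. Bob streams all blocks $(j,q)$ with $j\neq i$, then the infinite filler $0\circ\Sigma^*$.
\item The resulting target excludes the single block $(i,x_i)$. A DFA recognizes it with two length-$p$ chains (tracking whether the one excluded address is matched) plus an $\ell$-counter, i.e.\ $2p+\ell+O(1)$ states.
\item The gap bound forces the limit hypothesis to meet every block $(i,q)$ with $q\neq x_i$ and to avoid block $(i,x_i)$. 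Bob therefore decodes $x_i$ by emptiness tests.
\end{itemize}
This gives memory at least $n\log_2 k$ with $\ell=s-2p-O(1)$, hence $k^{\Omega(\varepsilon s)}$ bits whenever $\Delta\le k^{(1-\varepsilon)s}$. The symmetric-difference variant follows because the candidate targets $\textnormal{Base}(i,q)$ pairwise differ on exactly $2k^{\ell}$ strings.
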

Most existing lower bounds in the identification/generation-in-the-limit literature involve some form of a diagonalization argument: The adversary constructs an enumeration of a problematic language by endlessly switching back-and-forth between enumerating different languages and causing the algorithm to fail at the objective at each switch in the process. These diagonalization arguments crucially rely on a special nesting structure present among the languages in the collection.

In contrast, our lower bound follows from a reduction to the \emph{Index} problem from communication complexity: a standard route to space lower bounds for streaming algorithms. The Index problem defines a one-way communication game between Alice and Bob: Alice receives a length-$n$ vector $x \in [k]^n$, Bob receives an index $i \in \{1,\ldots,n\}$, and Bob must output $x_i$ after receiving a single message from Alice. A standard pigeonhole-principle bound implies that any protocol deterministically solving Index requires communication of $n \log_2 k$ bits from Alice to Bob\footnote{Related strong lower bounds, with quantitatively similar rates, also hold for randomized one-way communication protocols; see, for example, \cite{bar-yossef2002coco,jayram2008toc}.}. To prove the streaming lower bound, we show that a space-efficient learner using only $\p(s,k)$ bits of memory and achieving sufficiently small generation gap induces a one-way protocol for Index with communication strictly below $n \log_2 k$ bits, yielding a contradiction.

The lower bound also shows that the generation gap achieved by the algorithm matches the optimal rate up to constant factors in the exponent. Together, \Cref{thm:intro-main-upper-bound} and \Cref{thm:intro-main-lower-bound} establish that the classical separation between identification and generation in the limit persists under polynomial space bounds.
This separation vanishes once the space budget increases from $\p(s,k)$ to $\exp(s,k)$. Under $\exp(s,k)$ space, an algorithm can identify $\mathcal{C}_{s,k}$ in the limit. Space-bounded generation in the limit therefore exhibits a sharp phase transition: Sub-exponential space forces an exponential generation gap, while exponential space permits exact identification.

%% file: Sections/Upper_Bounds.tex
\section{Upper Bounds}
\label{sec:upper-bounds}
In this section, we derive our space-efficient algorithm $\A$, which, for all sufficiently large $t$, outputs a DFA $\A(t)$ that satisfies $L(\A(t)) \subseteq K$, where $K$ denotes the unknown target language assumed to require a DFA of size at most $s$ for recognition. Furthermore, $\A(t)$ satisfies that $|K \setminus L(\A(t))|\le O(k^{2s-2})$. In particular, $L(\A(t))$ contains all strings in $K$ that have length at least $2s-1$. Before describing our learning procedure, we first recall Angluin's algorithm for identification in the limit to illustrate the challenges for achieving space-efficiency.

\paragraph{Recalling Angluin's identification algorithm for finite collections.} Let us arbitrarily enumerate the finitely many DFAs of size at most $s$ as $A_1, A_2, \dots,A_N$.\footnote{Our enumeration will contain many isomorphic DFAs recognizing the same language. This redundancy is harmless. We only require that each regular language in $\mathcal{C}_{s,k}$ receives some representation in the enumeration.} Here, $N \le 2^s \cdot s^{ks}$ ($2^s$ ways of selecting an accepting subset of at most $s$ states, followed by specifying the transition on each symbol for every state). By assumption, there exists some index $i^\star$ for which $L(A_{i^\star})=K$. As established by \citet{angluin1980inductive}, all finite collections are identifiable in the limit (without space constraints). The algorithm operates as follows: It first topologically sorts the DFAs $A_1, A_2, \dots,A_N$ according to the \textit{strict partial order} $\prec_1$ induced by \textit{strict inclusion}: 
\begin{equation}
    \label{eqn:po1-def}
    A \prec_1 A' \iff L(A) \subset L(A').
\end{equation}
Namely, if $L(A)$ is a strict subset of $L(A')$, then $A$ appears before $A'$ in the topological sort.\footnote{It can be readily verified that the relation in \eqref{eqn:po1-def} is a strict partial order: It is irreflexive, asymmetric and transitive.} Indeed, since the collection is finite, a topological sort corresponding to any partial order exists, and can be constructed by a standard depth-first search over a directed graph whose nodes correspond to the DFAs $A_i$.
So, abusing notation slightly, suppose that the DFAs $A_1,\dots,A_N$ are already ordered according to the topological sort induced by $\prec_1$. 
At time step $t$, after having seen input $w_1,\dots,w_t$, the algorithm outputs the DFA $A_{i(t)}$, where $i(t)$ is the smallest index of a DFA that accepts each of $w_1,\dots,w_t$\footnote{Such an $i(t)$ always exists, since the target language $K$ is recognized by some DFA in the collection.}. Concretely, for every $j < i(t)$, it holds that $\{w_1,\dots,w_t\} \nsubseteq L(A_j)$. Let $i^\star$ be the index of the DFA that recognizes $K$. Note that $\{w_1,\dots,w_t\} \subseteq L(A_{i^\star})$ for every $t$. Therefore, $i(t) \le i^\star$ for every $t$; since $i(t)$ is non-decreasing in $t$, $i(t)$ must eventually converge. Furthermore, the DFA $A_{i(t)}$ at which the algorithm converges must satisfy $L(A_{i^\star}) \subseteq L(A_{i(t)})$, since otherwise, there exists some $w \in L(A_{i^\star}) \setminus L(A_{i(t)})$, which is guaranteed to show up eventually in the input, causing $A_{i(t)}$ to be invalidated. Finally, since $i(t) \le i^\star$, and $L(A_{i^\star}) \subseteq L(A_{i(t)})$, it must necessarily be the case that $L(A_{i^\star}) = L(A_{i(t)})$, by the definition of $\prec_1$.

An important feature of the algorithm above is that it keeps track of the entire history of inputs seen so far at every time step. This is not feasible in our framework of space efficiency where we insist on maintaining only $\p(s, k)$ bits of working memory across time steps. As we shall see later, it is actually sufficient to remember only those input strings that have length at most $O(s)$ for exact identification; however, even this requires $\exp(s,k)$ memory. 

There is also the separate issue of traversing the DFAs $A_1,\dots,A_N$ in an order topologically sorted according to $\prec_1$ in a space-efficient manner. We address this latter issue first.
For any automata $C,D$, checking $C \prec_1 D$ amounts to testing $L(C) \setminus L(D)=\emptyset$ and $L(D)\setminus L(C) \neq \emptyset$. Both conditions can be checked via the standard product constructions (see \Cref{sec:appendix-dfa-facts}) of DFAs (on at most $s^2$ states) that recognize $L(C) \setminus L(D)$ and $L(D) \setminus L(C)$, which can be done in $\p(s, k)$ time (and hence also space). Thereafter, testing whether the product automaton recognizes an empty language amounts to checking if there exists a directed path beginning from the initial state and reaching an accepting state, which can again be done in $\p(s,k)$ time (and hence space). Thus, one can verify $C \prec_1 D$ in $\p(s, k)$ time. However, implementing the full topological sort of the exponentially many DFAs using only polynomial space requires further attention.

\subsection{A space-efficient topological sort}
\label{sec:space-efficient-topo-sort}

Recall that the standard algorithm for topological sorting performs a depth-first search over a directed graph whose nodes consists of the DFAs in the collection $\mathcal{C}_{s,k}$. In the worst case, this approach requires maintaining a history of exponentially many already-enumerated DFAs, which clashes with our memory constraints. To remedy this, we consider a \textit{Rank Iteration Strategy} for enumerating $\mathcal{C}_{s,k}$ in a space-efficient manner.

\SetAlgoNoEnd
\begin{algorithm}[t]
\caption{Rank Iteration Strategy (recursive doubling for rank)}
\label{algo:rank-iteration-strategy}

\newcommand{\EnumAll}{Let $B_1,\dots,B_N$ be any enumeration of all DFAs of size at most $s$}

\noindent
\begin{minipage}[t]{0.49\linewidth}
\small
\SetKwProg{Proc}{Procedure}{:}{}
\SetKwFunction{rankIteration}{rankIteration}

\KwIn{DFA size $s$, strict partial order $\prec$,\\ 
\hspace*{\algorithmicindent}\hspace*{\algorithmicindent}position $j\le N$}

\KwOut{DFA $A_j$ that is $j^\text{th}$ in a topological sorting of all DFAs of size $\le s$ under $\prec$}
\vspace{0.2em}
\Proc{\rankIteration{$s,\prec,j$}}{
  Initialize $\mathrm{count}\gets 0$\;
  
  \For{$r=0,\dots,N-1$}{
    \EnumAll\;
    
    \For{$i=1,\dots,N$}{
      \If{\textup{\texttt{computeRank}}$(B_i,\prec)=r$}{
        $\mathrm{count}\gets \mathrm{count}+1$\\
        \lIf{$\mathrm{count}=j$}{\Return $B_i$}
      }
    }
  }
}
\end{minipage}\hfill
\begin{minipage}[t]{0.49\linewidth}
\small
\SetKwProg{Proc}{Procedure}{:}{}
\SetKwFunction{computeRank}{computeRank}

\KwIn{DFA $D$, strict partial order $\prec$}
\KwOut{$\rank(D,\prec)$ as in \eqref{eqn:rank-def}}
\vspace{0.2em}
\Proc{\computeRank{$D,\prec$}}{
  \For{$\ell=N-1,N-2,\dots,1$}{
    \EnumAll\;
    
    \For{$i=1,\dots,N$}{
      \lIf{\textup{\texttt{path}}$(B_i,D,\ell,\prec)$}{\Return $\ell$}
    }
  }
  \Return $0$\;
}
\end{minipage}

\vspace{1em}

\noindent
\begin{minipage}[t]{\linewidth}
\small
\SetKwProg{Proc}{Procedure}{:}{}
\SetKwFunction{path}{path}

\KwIn{DFAs $C,D$, path length $\ell\ge 1$, strict partial order $\prec$}
\KwOut{True iff there exists a $\prec$-chain of length $\ell$ from $C$ to $D$}
\vspace{0.2em}
\Proc{\path{$C,D,\ell,\prec$}}{
  \lIf{$\ell=1$ \textbf{and} $C\prec D$}{\Return True}
  \EnumAll;\; let $m\gets \lceil \ell/2\rceil$\;
  
  \For{$i=1,\dots,N$}{
    \If{\textup{\texttt{path}}$(C,B_i,m,\prec)$ \textbf{and}
        \textup{\texttt{path}}$(B_i,D,\ell-m,\prec)$}{
      \Return True
      \vspace{0.3em}
    }
  }
  \Return False\;
}
\end{minipage}
\end{algorithm}

Concretely, given an arbitrary strict partial order $\prec$, for any DFA $A$ in the collection $\mathcal{C}_{s,k}=\{A_1,\dots,A_N\}$, define its \textit{rank} with respect to $\prec$, denoted $\rank(A, \prec)$, as follows:
\begin{equation}
    \label{eqn:rank-def}
    \rank(A, \prec) := \max\{\ell : \exists A_{i_1},\dots,A_{i_\ell} \in \mathcal{C}_{s,k} \text{ such that } A_{i_1} \prec \dots \prec A_{i_\ell} \prec A\}.
\end{equation}
Here, $\rank(A, \prec)=0$ if there does not exist
$A_{i_1} \in \mathcal{C}_{s,k}$ satisfying $A_{i_1} \prec A$. The rank of any $A \in \mathcal{C}_{s,k}$ can range from $0$ to $N-1$. The following elementary observation relates rank to the partial order $\prec$:
\begin{observation}[Relating $\rank$ to $\prec$]
    \label{obs:rank-to-po}
    If $A' \prec A$, then $\rank(A, \prec) \ge \rank(A', \prec)+1$.
\end{observation}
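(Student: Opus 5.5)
The plan is to prove the observation directly from the definition of rank in \eqref{eqn:rank-def} by extending a witnessing chain for $A'$ by one link. Let $r := \rank(A',\prec)$. Since $\mathcal{C}_{s,k}$ is finite and $\prec$ is a strict partial order (hence irreflexive and transitive, so chains cannot repeat elements), chain lengths are bounded by $N-1$. The maximum in \eqref{eqn:rank-def} is therefore attained. If $r \ge 1$, fix DFAs $A_{i_1},\dots,A_{i_r} \in \mathcal{C}_{s,k}$ with $A_{i_1} \prec \dots \prec A_{i_r} \prec A'$.

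Next, I would append $A'$ to this chain. Setting $A_{i_{r+1}} := A'$, which is itself a member of $\mathcal{C}_{s,k}$, and using the hypothesis $A' \prec A$, we obtain the chain
\[
A_{i_1} \prec \dots \prec A_{i_r} \prec A_{i_{r+1}} \prec A.
\]
This has $r+1$ elements preceding $A$, so it is admissible in the maximum defining $\rank(A,\prec)$. Hence $\rank(A,\prec) \ge r+1$.

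The edge case $r = 0$ is handled by the same idea: the single element $A_{i_1} := A'$ satisfies $A_{i_1} \prec A$, so $\rank(A,\prec) \ge 1 = r+1$. The only point needing any care, rather than a real obstacle, is confirming that the maximum is well defined and attained. This follows from finiteness of $\mathcal{C}_{s,k}$ together with irreflexivity and transitivity of $\prec$, which rule out arbitrarily long chains.
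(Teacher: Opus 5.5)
Your proof is correct and follows essentially the same approach as the paper, which states this as an elementary observation without a written proof: take a maximum chain witnessing $\mathrm{rank}(A',\prec)$ and append $A'$ using $A' \prec A$, handling $\mathrm{rank}(A',\prec)=0$ separately. Your check that the maximum is well defined, via finiteness together with irreflexivity and transitivity (so chain elements are distinct), correctly covers the only detail that needs care.
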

This observation implies that, in order to traverse the DFAs topologically sorted according to $\prec$, it suffices to iterate through them in increasing order of their rank. This gives rise to the Rank Iteration Strategy encapsulated in Algorithm \ref{algo:rank-iteration-strategy}. The strategy processes DFAs in increasing order of their ranks. The rank of each DFA is computed space-efficiently using a technique inspired by the ``middle-first search'' approach used in Savitch's theorem \citep{savitch1970relationships}. Savitch's theorem gives a space-efficient algorithm for checking if two nodes are connected by a path of length $\ell$ in a directed graph, by recursively checking for the existence of a ``middle node'' that lies at distance $\ell/2$ from each of the nodes. The crucial point is that these recursive checks can reuse space. A similar idea applies for our purpose of computing rank, where a path of length $\ell$ between two DFAs $C$ and $D$ corresponds to a chain $C \prec B_1 \prec \dots \prec B_{\ell-1} \prec D$ in the partial order. In this case, if the relation $\prec$ can be checked using polynomial space, then the entire Rank Iteration Strategy uses only polynomial space. This leads to the following proposition, whose proof is given in \Cref{sec:appendix-upper-bounds}.
\begin{restatable}[Rank Iteration Space Complexity]{proposition}{RankIterationSpaceComplexity}
    \label{prop:rank-iteration-space-cxty} \textup{\texttt{rankIteration}}$(s,\prec,j)$ given in \Cref{algo:rank-iteration-strategy} requires $\p(s, k)$ bits of memory, provided one can check ``$C \prec D$?'' using $\p(s, k)$ space for any two DFAs $C$ and $D$ with at most $s$ states.
\end{restatable}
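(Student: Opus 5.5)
The plan is to bound the memory of each of the three procedures in \Cref{algo:rank-iteration-strategy} separately, starting with \texttt{path}, which is the only recursive one. First I record the basic sizes. A DFA with at most $s$ states over an alphabet of size $k$ is described by its transition table, its initial state and its accepting set, which takes $O(ks\log s + s)$ bits. The number of DFAs satisfies $N \le 2^s s^{ks}$, so any index or counter in $\{0,\dots,N\}$ (the values $i$, $r$, $\ell$, $m$, $j$ and $\mathrm{count}$) fits in $\log_2 N + 1 = O(ks\log s)$ bits. The inner loop ``\EnumAll'' never stores the whole list. It keeps only the loop index $i$ and decodes $B_i$ from $i$ whenever it is needed, for example by reading the bits of $i$ in mixed radix as the transition table plus the accepting set. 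Decoding takes $\p(s,k)$ space.

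Next I analyse \texttt{path}$(C,D,\ell,\prec)$ in the style of Savitch's theorem. A single stack frame stores $C$, $D$, $\ell$, $m$, the loop index $i$ and the current $B_i$, which is $\p(s,k)$ bits. In the base case it also needs the space used to test $C\prec D$, which is $\p(s,k)$ by hypothesis. The two recursive calls run one after the other, so the space of the first can be reused by the second. The recursion depth is therefore what matters. The argument passed down is at most $\lceil \ell/2\rceil$, and from an initial $\ell \le N-1$ this gives depth $O(\log N) = O(ks\log s)$. One subtlety is the case $\ell = 1$ with $C \not\prec D$: the procedure then splits into $m = 1$ and $\ell - m = 0$. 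I would treat $\ell = 0$ (and $\ell = 1$ after a failed test) as an immediate False, or read the base case that way. Then every call has $\ell \ge 1$, the argument strictly decreases whenever $\ell \ge 2$, and the depth bound holds. The total space for \texttt{path} is the depth times the frame size plus one comparison, which is $O(\log N)\cdot \p(s,k) = \p(s,k)$.

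I expect this depth-times-frame-size bound, together with the $\ell = 0$ edge case, to be the only delicate step.

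Finally, \texttt{computeRank} keeps only $\ell$, $i$ and the description of $D$, and calls \texttt{path} one call at a time, reusing its space. So it uses $\p(s,k)$ space. \texttt{rankIteration} in the same way keeps only $r$, $i$, $\mathrm{count}$ and $B_i$, plus the space of one \texttt{computeRank} call at a time. Adding these bounds gives $\p(s,k)$ bits in total.

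The proposition only claims a space bound, but I would also check correctness briefly. A call \texttt{path}$(C,D,\ell,\prec)$ returns True iff there is a $\prec$-chain of exactly $\ell$ steps from $C$ to $D$. This follows by induction on $\ell$, splitting any such chain at its $m$-th element. Hence \texttt{computeRank} returns the largest $\ell$ for which some chain of $\ell$ steps ends at $D$, and this equals \eqref{eqn:rank-def}. By \Cref{obs:rank-to-po}, listing the DFAs in order of increasing rank is a topological order, so the $j$-th DFA returned is the correct $A_j$.
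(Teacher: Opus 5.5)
Your proposal is correct and follows essentially the same route as the paper. Both use a Savitch-style analysis of \texttt{path} (frame size $\mathrm{poly}(s,k)$, recursion depth $O(\log N)$, space reused between the two sequential recursive calls), with \texttt{computeRank} and \texttt{rankIteration} each adding only $\mathrm{poly}(s,k)$ bits of loop state around a single reused subroutine call. Your handling of the case $\ell=1$, $C\not\prec D$ is a point the paper's proof silently assumes away: the pseudocode as written would recurse on $m=1$ and $\ell-m=0$ without terminating, so reading the base case as returning False there is the right fix.
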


\subsection{A natural space-efficient modification and the main challenge}
\label{sec:modification-and-challenge}

We now tackle the issue of storing input history. The subroutine \mbox{\effiter$(\prec)$} below modifies Angluin’s algorithm to avoid storing the full input history $\{w_1,\dots,w_t\}$ up to time $t$. However, this modification will necessitate changing the partial order $\prec_1$ considered by Angluin's algorithm, as the proceeding analysis reveals. 

\begin{framed}
    \noindent\effiter$(\prec)$: Let $A_1,\dots,A_N$ denote all the DFAs of size at most $s$ topologically sorted according to $\prec$. The previous analysis shows that we can traverse the DFAs in this order space-efficiently.
    Let $i(0)=1$. At any time step $t \ge 1$, the algorithm first initializes $i(t)=i(t-1)$. After receiving the streamed input string $w_t$, the algorithm checks if $w_t \in L(A_{i(t)})$, which only requires tracing $w_t$ through the DFA $A_{i(t)}$, and hence requires constant memory. Here, $A_{i(t)}$ is retrieved by invoking \textup{\texttt{rankIteration}}$(s,\prec,i(t))$. If this check does not pass, the algorithm increments $i(t) \gets i(t)+1$ and outputs the DFA $A_{i(t)}$. Note that the outputted $A_{i(t)}$ might not accept $w_t$. Indeed, requiring the algorithm to retest membership of $w_t$ in $L(A_{i(t)})$ could necessitate remembering a prohibitively long string $w_t$.
\end{framed}
We can instantiate \effiter\ with the partial order $\prec_1$, and by the preceding discussion, this procedure uses only $\p(s,k)$ memory. The remaining question is whether it identifies $K$ in the limit; i.e., whether it eventually outputs $A_{i(t)}$ with $L(A_{i(t)})=K$ for all $t \geq t^\star$.

Recall that there exists an index $i^\star$ such that $L(A_{i^\star})=K$. Now, if the algorithm ever arrives at index $i^\star$, it never proceeds beyond it, since every $w_t \in K$. Thus, in the limit, the algorithm converges to some index $i(t) \le i^\star$. Then, it must be the case that $|K \setminus L(A_{i(t)})| < \infty$; otherwise, we are guaranteed to see some string $w \in K \setminus L(A_{i(t)})$ in the future, contradicting convergence at $i(t)$.

If $i(t)=i^\star$, then we identify $K$. Otherwise, $i(t) < i^\star$. We already argued above that $|K \setminus L(A_{i(t)})| < \infty$. If $|K \setminus L(A_{i(t)})| = 0$, then  $K \subseteq L(A_{i(t)})$. In that case, we are guaranteed that $K = L(A_{i(t)})$. Indeed, $K \subset L(A_{i(t)})$ would contradict the definition of $\prec_1$. So, we achieve the desired objective in this case as well.

The only case that remains is when $i(t) < i^\star$ and $|K \setminus L(A_{i(t)})| < \infty \neq 0$. Unfortunately, in this case, it is possible that $L(A_{i(t)}) \setminus K$ is non-empty as well. In fact, $L(A_{i(t)})$ could even contain infinitely many strings outside $K$. So, in this case, not only do we fail to achieve the desired objective of identification, but the language that we converge to may contain infinitely many hallucinations.

This unwanted case is a direct consequence of the memory constraint. Indeed, when $K \setminus L(A_{i(t)})$ is nonempty, and if we were to store the entire history of inputs at every time step (as Angluin's algorithm does), the history would eventually contain some string in $K \setminus L(A_{i(t)})$, causing us to proceed beyond $A_{i(t)}$. However, consider the possibility where $K \setminus L(A_{i(t)})$ has very few strings which all appeared in the distant past without showing up again. Since we do not store past inputs in memory, and only make our decisions based on the most recent string, we never see any evidence again that would cause us to move beyond $A_{i(t)}$.

As a final remark, note that \effiter$(\prec_1)$ \emph{does} succeed with exact identification under the additional assumption that every string in $K$ appears infinitely often in the input stream. In this setting, any missing string $w\in K\setminus L(A_{i(t)})$ would eventually reappear and invalidate the currently wrong hypothesis.

\subsection{A slightly more robust partial order}
\label{sec:robust-po}
The analysis above isolates the main limitation of the partial order $\prec_1$: If there exist two DFAs $A$ and $A'$ such that $|L(A) \setminus L(A')| < \infty$ but $|L(A') \setminus L(A)|$ is huge, then topologically sorting according to $\prec_1$ allows $A'$ to be placed before $A$. Consequently, if $L(A)$ is the target language, we might converge at $L(A')$, which contains a large number of hallucinations outside $L(A)$. Our final solution arises as a direct remedy to this issue. 

Concretely, consider an ordering of the DFAs which enforces the following property: For any two DFAs $A$ and $A'$, if $|L(A) \setminus L(A')| < \infty$ and $|L(A) \setminus L(A')| < |L(A') \setminus L(A)|$, then $A$ appears before $A'$. Namely, we would like to consider the relation
\begin{equation}
    \label{eqn:po2-def}
    A \prec_2 A' \iff |L(A) \setminus L(A')| < \infty \text{ and } |L(A) \setminus L(A')| < |L(A') \setminus L(A)|.
\end{equation}
If this relation were a strict partial order, we could topologically sort the DFAs according to $\prec_2$ and run the space-efficient algorithm described above on this ordering. As it turns out, the relation is indeed a strict partial order; the elementary proof appears in \Cref{sec:appendix-upper-bounds}.
\begin{restatable}[$\prec_2$ Strict Partial Order]{proposition}{popartialorder}
    \label{prop:po2-strict-partial-order}
    The relation $\prec_2$ defined in \eqref{eqn:po2-def} is a strict partial order.
\end{restatable}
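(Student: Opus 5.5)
We prove the three defining properties, irreflexivity, asymmetry and transitivity, directly from \eqref{eqn:po2-def}. Throughout, write $X=L(A)$, $Y=L(B)$, $Z=L(C)$, and allow set sizes to be $\infty$. Since the relation only involves the languages, we argue about sets rather than automata.

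\emph{Irreflexivity and asymmetry.} Irreflexivity is immediate: $|X\setminus X| = 0$, and $0<0$ fails, so $A\not\prec_2 A$. For asymmetry, suppose $A\prec_2 B$. Then $a:=|X\setminus Y|$ is finite and $a<b:=|Y\setminus X|$.
\begin{itemize}
\item If also $B\prec_2 A$, then $b$ is finite and $b<a$.
\item This contradicts $a<b$.
\end{itemize}
Asymmetry in fact also follows from irreflexivity together with transitivity.

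\emph{Transitivity.} Assume $A\prec_2 B$ and $B\prec_2 C$. Finiteness of $X\setminus Z$ comes from the inclusion $X\setminus Z\subseteq (X\setminus Y)\cup(Y\setminus Z)$. Both sets on the right are finite by hypothesis, so $X \setminus Z$ is finite.

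It remains to show $|X\setminus Z|<|Z\setminus X|$, which we split into two cases.

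\emph{Case 1: $Z\setminus X$ is infinite.} The inequality is immediate, since $X\setminus Z$ is finite.

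\emph{Case 2: $Z\setminus X$ is finite.} First we show that all six pairwise differences among $X,Y,Z$ are finite:
\begin{itemize}
\item $X\setminus Y$ and $Y\setminus Z$ are finite by hypothesis, and $X\setminus Z$ by the previous paragraph.
\item $Y\setminus X\subseteq (Y\setminus Z)\cup(Z\setminus X)$, which is finite.
\item $Z\setminus Y\subseteq (Z\setminus X)\cup(X\setminus Y)$, which is finite.
\end{itemize}
Hence the finite set $S := (X\triangle Y)\cup(Y\triangle Z)\cup(X\triangle Z)$ contains every pairwise difference.

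For sets $U,V\in\{X,Y,Z\}$, define the signed quantity $f(U,V) := |U\setminus V|-|V\setminus U|$. Since all differences lie in $S$, we have $f(U,V)=|U\cap S|-|V\cap S|$. Indeed, elements of $U\cap V\cap S$ cancel in this expression. This representation telescopes:
\[
f(X,Z)=f(X,Y)+f(Y,Z).
\]
The hypotheses $A\prec_2 B$ and $B\prec_2 C$ give $f(X,Y)<0$ and $f(Y,Z)<0$, so $f(X,Z)<0$. This is exactly $|X\setminus Z|<|Z\setminus X|$, so $A\prec_2 C$.

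The main subtlety is this transitivity step when some differences may be infinite. The inequality $|Y \setminus X| > |X \setminus Y|$ carries no usable numeric information when $|Y\setminus X|=\infty$. The case split on finiteness of $Z\setminus X$ is what handles this. Once $Z\setminus X$ is finite, every difference is forced to be finite, and the signed counts become additive.
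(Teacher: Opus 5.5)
Your proof is correct and follows essentially the paper's route: the same union-bound argument for finiteness of $L(A)\setminus L(C)$, the same case split on whether $L(C)\setminus L(A)$ is finite, and then an additive cancellation once all differences are known to be finite. The only difference is bookkeeping. The paper splits each pairwise difference into Venn regions, adds the two inequalities and cancels common terms; you restrict to the finite set $S$ and use the telescoping identity $f(X,Z)=f(X,Y)+f(Y,Z)$, which is the same cancellation in a more compact form.
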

Next, we observe that checking $A \prec_2 A'$ can also be done with $\p(s,k)$ space. For this, we once again construct the product automata that recognize $L(A) \setminus L(A')$ and $L(A') \setminus L(A)$, respectively (each of these have at most $s^2$ states). Then, we use the fact that a regular language is infinite if and only if an automaton recognizing it has a cycle that can be reached from the initial state, and can then go on to reach an accepting state. We can check this using standard depth-first search on the directed graph underlying the automaton. For our purposes, this requires $\p(s,k)$ time, and hence space. Finally, if both the languages $L(A) \setminus L(A')$ and $L(A') \setminus L(A)$ are determined to be finite, we require comparing their sizes. In this case, we observe that both languages must contain strings of length at most $s^2$. Otherwise, if either contained a string of length larger than $s^2$, this string would induce a cycle while traversing its product automaton, which could be pumped infinitely. So, the size of both languages can be computed exactly by iterating over all strings of length at most $s^2$, and checking membership in the automaton. This step can also be executed in $\p(s,k)$ space.

Using all the ingredients established so far, we can then show that \effiter, invoked with the partial order $\prec_2$, converges to a DFA that has a finite symmetric difference with the target language $K$ --- this notion has been referred to as \textit{hyper-equivalence} in the automata literature~\citep{badr2009hyper_a, gawrychowski2011minimising, maletti2011optimal}.
\begin{proposition}[$\prec_2$ Converges to Finite $\triangle$]
    \label{prop:po2-converges-to-good-language}
    \effiter$(\prec_2)$ converges to an index $i(t)$ satisfying $|L(A_{i(t)}) \triangle K| < \infty$, while using only $\p(s,k)$ memory. Furthermore, if every string in $K$ shows up infinitely often in the input stream, then $L(A_{i(t)})=K$ in the limit. 
\end{proposition}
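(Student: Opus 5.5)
The proof has three parts: memory, convergence, and finite symmetric difference. The finite-difference part is the only one that really uses the structure of $\prec_2$.

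\textbf{Memory.} By \Cref{prop:po2-strict-partial-order}, $\prec_2$ is a strict partial order. The discussion preceding the statement shows that ``$A \prec_2 A'$?'' can be decided in $\p(s,k)$ space via product automata, a reachable-cycle test for infiniteness, and exhaustive counting of strings of length at most $s^2$. \Cref{prop:rank-iteration-space-cxty} therefore implies that each call \textup{\texttt{rankIteration}}$(s,\prec_2,i(t))$ uses $\p(s,k)$ space. Across time steps, \effiter\ stores only the index $i(t)\le N\le 2^s s^{ks}$, which takes $O(ks\log s)$ bits, plus the current DFA state while tracing $w_t$. The space can be reused between calls, so the total memory is $\p(s,k)$.

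\textbf{Convergence.} Let $i^\star$ be the smallest index with $L(A_{i^\star})=K$. The index $i(t)$ is non-decreasing. If it ever equals $i^\star$, it never moves again, because every $w_t\in K$. Hence $i(t)\le i^\star$ for all $t$, and $i(t)$ stabilizes at some $i_\infty$. Write $L=L(A_{i_\infty})$. If $K\setminus L$ were infinite, the enumeration would later present some $w\in K\setminus L$ and force an increment, contradicting stabilization. So $|K\setminus L|<\infty$.

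\textbf{Finite symmetric difference.} This is the key step. Suppose $L\setminus K$ is infinite; then $i_\infty\neq i^\star$. We have $|K\setminus L|<\infty$ and $|K\setminus L|<\infty=|L\setminus K|$, so by \eqref{eqn:po2-def} $A_{i^\star}\prec_2 A_{i_\infty}$. The topological order then places $i^\star$ before $i_\infty$, i.e.\ $i^\star<i_\infty$, contradicting $i_\infty\le i^\star$. Therefore $|L\triangle K|<\infty$.

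The main obstacle I expect is in this step. Minimality of index only rules out $L$ being strictly ``$\prec_2$-above'' $K$, so one has to check that the finiteness clause in $\prec_2$ is what makes the comparison applicable. This is exactly where $\prec_1$ failed, and the argument should make explicit that $\prec_2$ is designed to repair it.

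\textbf{Infinitely-often case.} Now assume every string of $K$ appears infinitely often. If some $w\in K\setminus L$ existed, it would reappear after stabilization and force an increment. Hence $K\subseteq L$, so $K\setminus L=\emptyset$ and $|K\setminus L|=0$. If $L\neq K$, then $L\setminus K\neq\emptyset$, so $0<|L\setminus K|$. Combined with $0<\infty$, this gives $A_{i^\star}\prec_2 A_{i_\infty}$. This again contradicts $i_\infty\le i^\star$ via the topological order, so $L=K$ in the limit.
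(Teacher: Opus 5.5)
Your proposal is correct and matches the paper's proof. Both arguments show that $i(t)$ stabilizes at some $i_\infty\le i^\star$ with $|K\setminus L(A_{i_\infty})|<\infty$, and then use that the $\prec_2$-topological order forbids $A_{i^\star}\prec_2 A_{i_\infty}$; the paper states this directly as $|L(A_{i_\infty})\setminus K|\le|K\setminus L(A_{i_\infty})|$ instead of arguing by contradiction, and it handles the infinitely-often case the same way you do, via $K\subseteq L(A_{i_\infty})$ and $0<|L(A_{i_\infty})\setminus K|$.
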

\begin{proof}
    Since it is guaranteed that $K=L(A_{i^\star})$ for some index $i^\star$, the algorithm never proceeds beyond $i^\star$, and hence, $i(t) \le i^\star$ in the limit. Furthermore, $|K \setminus L(A_{i(t)})| < \infty$, since otherwise, the algorithm eventually sees a string that causes it to move past $i(t)$. So, either $i(t)=i^\star$, or $i(t) < i^\star$. In the latter case, it must hold that $|L(A_{i(t)})\setminus K| \le |K \setminus L(A_{i(t)})| < \infty$, since we topologically sorted according to $\prec_2$. Either way, $|L(A_{i(t)}) \triangle K| < \infty$ holds. Since we argued above that checking $\prec_2$ uses $\p(s,k)$ space, by Proposition \ref{prop:rank-iteration-space-cxty}, the entire algorithm uses only $\p(s,k)$ space.

    Finally, if every string in $K$ shows up in the input infinitely often, then the algorithm converges to an $i(t)$ satisfying $L(A_{i(t)}) \supseteq L(A_{i^\star})$. Otherwise, a string in $L(A_{i^\star}) \setminus L(A_{i(t)})$ is guaranteed to show up in the input, causing the algorithm to move beyond $i(t)$. Now, suppose that $L(A_{i(t)}) \supset L(A_{i^\star})$. Then, we have that $|L(A_{i^\star}) \setminus L(A_{i(t)})|=0<|L(A_{i(t)}) \setminus L(A_{i^\star})|$, which contradicts sorting according to $\prec_2$ since $i(t) < i^\star$. Thus, $L(A_{i(t)})=L(A_{i^\star})=K$. 
\end{proof}

\subsection{The final algorithm}
\label{sec:final-algorithm}

We now have all the ingredients necessary to state our final algorithm. By Proposition \ref{prop:po2-converges-to-good-language}, we know that \effiter$(\prec_2)$ converges to a DFA $A_{i(t)}$ that satisfies $|L({A_{i(t)}}) \triangle K| < \infty$. If we can now output a DFA that accepts a large subset of $L({A_{i(t)}})$ but avoids the strings in the finite symmetric difference $L({A_{i(t)}}) \triangle K$, our objective of language generation with finite gap would be accomplished. 

Towards this, observe that $L({A_{i(t)}}) \triangle K$ is yet another regular language. Moreover, the product DFA $C$ recognizing $L({A_{i(t)}}) \triangle K$ has at most $s^2$ states. Hence, given that this language is finite, it must not include any string of length $\ge s^2$; otherwise, this string would visit some state in the DFA $C$ twice before reaching an accepting state, yielding a cycle, which can be pumped arbitrarily often to generate infinitely many accepted strings. In fact, we can prove a stronger bound, using a prior result by \cite{gawrychowski2011minimising}, which shows that $L({A_{i(t)}}) \triangle K$ can only contain strings of length at most $2s-2$. The proof of this result appears in \Cref{sec:appendix-upper-bounds}.
\begin{restatable}[Symmetric Difference Contains Length-$O(s)$ Strings]{lemma}{linearbound}
    \label{lemma:linear-bound-symmetric-diff}
    Let $A=(Q_A, \Sigma, \delta_A, q_{0}^A, F_A)$ and $B=(Q_B, \Sigma, \delta_B, q_{0}^B, F_B)$ denote DFAs having at most $s$ states over a common alphabet $\Sigma$, such that $|L(A) \triangle L(B)| < \infty$. Then, for any string $w \in \Sigma^*$, if $|w| \ge 2s-1$, then $w \notin L(A) \triangle L(B)$.
\end{restatable}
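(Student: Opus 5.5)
The plan is a refinement argument on the disjoint union of the two automata, using a family of ``agree-on-long-words'' equivalence relations in the spirit of \citet{gawrychowski2011minimising}. Let $Q = Q_A \sqcup Q_B$, so $|Q| \le 2s$, with transition function $\delta$ given by $\delta_A$ on $Q_A$ and by $\delta_B$ on $Q_B$. For a state $p \in Q$, write $L_p$ for the set of words accepted from $p$ (using $F_A$ or $F_B$ as appropriate). For $j \ge 0$, define the relation $E_j$ on $Q$ by
\[
p \mathrel{E_j} q \iff \text{for every } w \in \Sigma^* \text{ with } |w| \ge j:\; w \in L_p \Leftrightarrow w \in L_q .
\]
Each $E_j$ is an equivalence relation, and $E_j \subseteq E_{j+1}$. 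So the sequence of partitions only coarsens as $j$ grows.

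The key structural step is the recursion
\[
p \mathrel{E_{j+1}} q \iff \delta(p,\sigma) \mathrel{E_j} \delta(q,\sigma) \text{ for all } \sigma \in \Sigma .
\]
This holds because every word of length at least $j+1$ has the form $\sigma v$ with $|v| \ge j$, and $\sigma v \in L_p$ iff $v \in L_{\delta(p,\sigma)}$. Hence $E_{j+1} = F(E_j)$ for a fixed map $F$ on relations. Consequently, once $E_{j} = E_{j+1}$ for some $j$, we get $E_{j'} = E_j$ for all $j' \ge j$.

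Let $c_j$ be the number of classes of $E_j$. Then $c_0 \le |Q| \le 2s$, and $c_j \ge 1$. Moreover $c_{j+1} < c_j$ whenever $E_{j+1} \ne E_j$, because $E_{j+1}$ strictly coarsens $E_j$. So the sequence can strictly decrease at most $c_0 - 1 \le 2s-1$ times. It therefore stabilizes at some index $j^\star \le 2s-1$, and $E_{2s-1} = E_{j}$ for all $j \ge 2s-1$.

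Next I identify the limit. Let $E_\infty = \bigcup_j E_j$, which is the relation ``$L_p \triangle L_q$ contains no word of length at least $j$, for some $j$''. Since $\Sigma$ is finite, this is the same as $|L_p \triangle L_q| < \infty$. By stabilization, $E_\infty = E_{2s-1}$. Now take $p = q_0^A$ and $q = q_0^B$. By hypothesis $|L(A) \triangle L(B)| < \infty$, so $q_0^A \mathrel{E_\infty} q_0^B$, hence $q_0^A \mathrel{E_{2s-1}} q_0^B$. Unwinding the definition, no word of length at least $2s-1$ lies in $L(A) \triangle L(B)$, which is exactly the claim.

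The step needing the most care is the counting: $E_0$ must be compared with the partition of a set of at most $2s$ states, not of $s^2$ product states. Working on the disjoint union rather than the product automaton is what yields the linear bound $2s-1$ instead of $s^2$. I also need to check that the recursion correctly handles the base of the chain, which it does: $E_0$ is Nerode equivalence on $Q$ and may already have fewer than $2s$ classes, which only helps.
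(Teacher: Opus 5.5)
Your proof is correct and follows essentially the same route as the paper: your relations $E_j$ on the disjoint union $Q_A \sqcup Q_B$ are exactly the paper's $D_j$ (pairs with $d_M(p,q)\le j$), and your class-counting argument gives stabilization by index $|Q|-1 \le 2s-1$, just as the paper's does. The only difference is cosmetic. You note that $E_{j+1}$ is obtained from $E_j$ by a fixed map, which yields ``once stable, always stable'' in one line; the paper instead derives that step by descending through its recursive formula for $d_M$.
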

Thus, if we output a DFA $A'$ that only accepts strings of length at least $2s-1$ in $L(A_{i(t)})$, we ensure that $L(A') \subseteq L(A_{i(t)}) \cap K \subseteq K$ as required. This leads to the following theorem, whose precise proof is given in \Cref{sec:appendix-upper-bounds}.
\begin{restatable}[Space-Efficient Language Generation]{theorem}{SpaceEfficientGenerationTheorem}
    \label{thm:space-efficient-generator}
    Let $K$ denote the unknown target language recognized by a DFA of size at most $s$. There exists an algorithm $\A$ which, for all sufficiently large $t$, outputs a DFA $A'=\A(t)$ that satisfies $L(A') \subseteq K$, and uses only $\p(s,k)$ space. Furthermore, the algorithm misses at most $|K \setminus L(A')| \le O(k^{2s-2})$ target strings.
\end{restatable}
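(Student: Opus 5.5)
The algorithm $\A$ runs \effiter$(\prec_2)$ and, after each processed string, outputs a post-processed DFA. If the current index is $i(t)$, the output is a DFA $A'$ with $L(A') = L(A_{i(t)}) \cap \Sigma^{\ge 2s-1}$. Concretely, $A'$ is the product of $A_{i(t)}$ with the length-counting DFA on $2s$ states: states $0,\dots,2s-1$, where state $2s-1$ is absorbing and is the only accepting state. The product has at most $2s^2$ states. Its description can be written to the output tape one transition at a time: we recompute $A_{i(t)}$ via \texttt{rankIteration}$(s,\prec_2,i(t))$ and pair it with the counter. This needs only $\p(s,k)$ working memory.

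For space, we invoke Proposition~\ref{prop:po2-converges-to-good-language}. \effiter$(\prec_2)$ uses $\p(s,k)$ space, and the index $i(t)\le N\le 2^s s^{ks}$ fits in $O(s + ks\log s)$ bits. The membership test of the streamed $w_t$ against $A_{i(t)}$ is done symbol by symbol. It stores only the current state, which costs $O(\log s)$ bits, plus the description of $A_{i(t)}$. The post-processing adds $O(\log s)$ bits for the counter. So the total is $\p(s,k)$.

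For correctness, Proposition~\ref{prop:po2-converges-to-good-language} gives a time $t^\star$ after which $i(t)$ is constant and $|L(A_{i(t)})\triangle K|<\infty$. Apply Lemma~\ref{lemma:linear-bound-symmetric-diff} with $A=A_{i(t)}$ and $B$ a DFA of size at most $s$ recognizing $K$. It says every string of length at least $2s-1$ lies outside $L(A_{i(t)})\triangle K$. Such a string therefore belongs to $L(A_{i(t)})$ if and only if it belongs to $K$. Hence
\[
L(A') \;=\; L(A_{i(t)})\cap \Sigma^{\ge 2s-1} \;=\; K\cap \Sigma^{\ge 2s-1}\subseteq K .
\]
This is the no-hallucination requirement. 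It also gives the stronger claim that every string of $K$ of length at least $2s-1$ is captured.

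For the gap, $K\setminus L(A')\subseteq \Sigma^{\le 2s-2}$, so
\[
|K\setminus L(A')| \;\le\; \sum_{\ell=0}^{2s-2}k^\ell \;=\; \frac{k^{2s-1}-1}{k-1} \;=\; O(k^{2s-2})
\]
for $k\ge2$. The case $k=1$ gives $2s-1=O(1)\cdot k^{2s-2}$ for fixed $s$; we treat that constant as absorbed, or note that the sum equals $2s-1$.

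The main obstacle is the limit behavior. Before $t^\star$ the index may move, but the definition only constrains outputs after some finite time. Since $i(t)$ is nondecreasing and bounded by $i^\star$, it stabilizes, and from then on $A'$ is fixed and satisfies the above. A second subtlety is that the output DFA has $O(s^2)$ states rather than at most $s$. This is harmless, because the definition only asks for a representation of some DFA.
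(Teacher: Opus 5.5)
For $k\ge 2$ your argument is the paper's proof: run \effiter$(\prec_2)$ and intersect the stabilized hypothesis with the $2s$-state filter for lengths $\ge 2s-1$. Lemma~\ref{lemma:linear-bound-symmetric-diff} then gives $L(A')=K\cap\Sigma^{\ge 2s-1}$, and the misses are at most $\frac{k^{2s-1}-1}{k-1}$. Your space accounting and stabilization argument are also fine.

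The gap is your treatment of $k=1$. There the target bound $O(k^{2s-2})$ is $O(1)$. Your output misses exactly $|K\cap\Sigma^{\le 2s-2}|$ strings, and this equals $2s-1$ for $K=\Sigma^*$. That language has a one-state DFA, so it lies in $\mathcal{C}_{s,1}$ for every $s$. The miss count therefore grows linearly in $s$, and absorbing $2s-1$ into the constant ``for fixed $s$'' makes the bound vacuous rather than proving it.

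The paper closes this case with an extra step. When $k=1$ there are only $2s-1$ strings of length at most $2s-2$, so the learner can record which of them have appeared in the stream using $O(s)$ bits. It then outputs a DFA for $L(A')\cup S$, where $S$ is the recorded set. We have $S\subseteq K$, since $S$ contains only streamed strings, so this adds no hallucinations. Because the enumeration is surjective, $S$ eventually equals $K\cap\Sigma^{\le 2s-2}$, and from then on the output is exactly $K$. This gives zero gap in $\mathrm{poly}(s)$ space.
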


Let us also remark here that by the conclusion of Lemma \ref{lemma:linear-bound-symmetric-diff}, our learning algorithm $\A$ is guaranteed to output an automaton $A'$ such that every string in $K \setminus L(A')$ has length at most $2s-2$. Thus, if the algorithm additionally keeps track of all the inputs of length at most $2s-2$ (which requires $\exp(s,k)$ space), $\A$ can output a DFA that exactly recognizes $K$. In other words, with $\exp(s,k)$ memory, we can identify $\mathcal{C}_{s,k}$ in the limit.

Finally, observe that our algorithm above ensures no hallucinations, but misses out on a finite amount of breadth (namely $O(k^{2s-2})$ strings) in the limit. Instead, we could have the algorithm output a different product DFA $A'$ (again on $\le 2s^2$ states), which accepts $L(A_{i(t)}) \cup \{w \in \Sigma^*:|w| \leq 2s-2\}$. This would ensure full breadth (i.e., $L(A') \supseteq K$), but a finite amount of hallucinations (again $O(k^{2s-2})$ strings). We thus have to necessarily incur a misspecification cost on $O(k^{2s-2})$ strings, either in hallucinations, or breadth. Our next result shows that this compromise is indeed unavoidable.

%% file: Sections/Lower_Bounds.tex
\section{Lower Bounds} \label{sec:lower-bounds}

In the previous section, we showed that polynomial space suffices to generate in the limit from $\mathcal{C}_{s,k}$ with a generation gap of $O(k^{2s-2})$. A natural question is whether this loss is merely an artifact of our algorithm, or whether it is fundamentally unavoidable under space constraints. In this section, we prove that the latter is actually the case.

Our result establishes that any algorithm that operates with sub-exponential memory must incur an exponential generation gap. Moreover, the same lower bound holds under the relaxed symmetric-difference objective from Definition~\ref{def:space-efficient-generation-in-the-limit}, where hallucinations and missed strings are counted together through $|L(A(t)) \triangle K|$. Thus, hallucinations cannot be traded for substantially smaller missed breadth: Any sub-exponential-memory learner still incurs an exponential total error. This demonstrates a sharp transition between what is achievable with polynomial space versus exponential space: Exponential memory permits full identification, whereas any polynomial-space learner is fundamentally forced to miss an exponential number of strings.

Our proof departs from the diagonalization-based arguments commonly used in the identification-in-the-limit literature. Instead, we adopt a communication-complexity perspective and reduce from the $\textnormal{INDEX}_n$ problem. At a high level, we show that a space-efficient learner with a sufficiently small generation gap could be used to derive an efficient communication protocol for the $\textnormal{INDEX}_n$ problem, which contradicts known lower bounds. We begin by recalling the relevant communication problem and its standard lower bound, and then we present the reduction that translates a small generation gap into an efficient Index protocol.
\begin{lemma}[Communication Lower Bound for $\textnormal{INDEX}_n$, \cite{kushilevitz1997communication}]
\label{lem:index-lb}
    Let $\textnormal{INDEX}_n$ be the one-way communication problem where Alice is given a vector
    $x \in [k]^n$, Bob is given an index $i \in [n]$, and Bob must output $x_i$ after a single message from Alice. Any deterministic protocol for $\textnormal{INDEX}_n$ requires
    at least $n\log_2k$ bits of communication.
\end{lemma}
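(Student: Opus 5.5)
The statement to prove is the standard deterministic lower bound for $\textnormal{INDEX}_n$ in Lemma~\ref{lem:index-lb}. I would prove it by a pigeonhole argument on Alice's message space, showing that Alice's message must determine her entire input.

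\textbf{Step 1: model the protocol.} Fix any deterministic one-way protocol. It consists of two maps:
\begin{itemize}
\item an encoding $M : [k]^n \to \{0,1\}^*$, where Alice's message is $M(x)$;
\item a decoding $B : \{0,1\}^* \times [n] \to [k]$, with correctness requirement $B(M(x), i) = x_i$ for all $x \in [k]^n$ and $i \in [n]$.
\end{itemize}
Suppose the protocol uses at most $c$ bits of communication. Then $M$ takes at most $2^c$ distinct values, or at most $2^{c+1}-1$ if variable-length messages are allowed. To keep the clean bound, I would fix messages to be exactly $c$ bits, padding if necessary.

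\textbf{Step 2: show $M$ is injective.} Suppose $x \neq y$ but $M(x)=M(y)$. Pick a coordinate $i$ with $x_i \neq y_i$. Then
\[
x_i = B(M(x),i) = B(M(y),i) = y_i,
\]
which is a contradiction. So $M$ is injective on $[k]^n$.

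\textbf{Step 3: count.} Injectivity gives $2^c \ge k^n$, hence $c \ge n\log_2 k$.

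\textbf{The only subtle point} is the convention for message length. If messages may have variable length up to $c$, the count becomes $2^{c+1} > k^n$. That yields $c \ge n\log_2 k - 1$, which is harmless for the downstream reduction since only the asymptotic $k^{\Omega(\varepsilon s)}$ bound is used. I would state the fixed-length convention explicitly so the lemma holds exactly as written. Nothing here depends on earlier results of the paper; the lemma is self-contained and classical, as cited from \cite{kushilevitz1997communication}.
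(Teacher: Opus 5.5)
Your proof is correct and is essentially the paper's argument: the paper also notes that fewer than $n\log_2 k$ bits allow at most $2^c < k^n$ messages, so two inputs $x \neq y$ must collide, and Bob then errs on a coordinate $i$ with $x_i \neq y_i$. Your fixed-length convention is the one the paper implicitly uses (it counts exactly $2^c$ messages), and it fits the reduction, where Alice's message is the learner's $m(s)$-bit memory state.
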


The proof of Lemma \ref{lem:index-lb} is a standard pigeonhole argument; for completeness we include it in \Cref{sec:appendix-lower-bounds}. The next lemma connects $\textnormal{INDEX}_n$ to generation in the limit over $\mathcal{C}{s,k}$, yielding a generation-gap lower bound in terms of the learner’s space.

\begin{lemma}\label{lem:space-lower-bound-analytical}
    Any algorithm $\mathcal{A}$ that uses $m(s)$ bits of space and generates in the limit from $\mathcal{C}_{s,k}$ must incur a generation gap of
    $
    \Delta_\mathcal{A}(s,k) \geq k^{s - 2\log_k((m(s)+1)/\log_2k)-3}.
    $
\end{lemma}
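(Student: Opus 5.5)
We reduce from $\textnormal{INDEX}_n$ (\Cref{lem:index-lb}). Set $d := s-3$ or so, and let $n := k^{d'}$ for a suitable $d'$. Identify $[n]$ with strings $u\in\Sigma^{d'}$. Alice holds $x\in[k]^n$, i.e., a function $x:\Sigma^{d'}\to\Sigma$. This encodes the finite language $K_x=\{u\,x(u): u\in\Sigma^{d'}\}$, or more robustly a language in $\mathcal{C}_{s,k}$ built from it. A trie-like DFA for a set of strings of length $d'+1$ needs too many states, so the budget forces $d'$ to be logarithmic. Instead I would use a language whose \emph{prefix} is arbitrary and whose suffix is padded infinitely: $K_{x}=\{u\,\sigma\,v: u\in\Sigma^{d'}, \sigma\neq x(u)\text{ or }\ldots\}$.

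A cleaner design, which I commit to, uses the stream to hold the information and the DFA class only to ensure every candidate target lies in $\mathcal{C}_{s,k}$. The targets are $K_0=\Sigma^*$ (1 state) and languages $K_S=\Sigma^*\setminus S$ for finite sets $S$ of strings of one fixed length $d$. Such a language needs a DFA with about $d+2$ states only if $S$ is small in structure, which fails for general $S$.

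So instead I use the two-phase streaming trick. Alice feeds the learner a prefix of the stream depending on $x$, sends the $m(s)$-bit memory state, and Bob continues the stream. The subtlety is that the learner's guarantee concerns the limit, not a single time. To handle this, Alice enumerates the set $T_x=\{u\sigma:\sigma\neq x(u)\}$ of strings of length $d+1$, where $n=k^{d}$ indices and $d+1\le s-2$. Bob then continues the stream forever with all strings of length $\neq d+1$, never repeating length-$(d+1)$ strings, so the full target is $K_x=\Sigma^*\setminus\{u\,x(u)\}$.

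Recognizing $K_x$ needs a DFA with about $k^{d}$ states, too many. I would therefore restrict $x$ to a subfamily $\mathcal{X}$ of size $2^{\Theta(m)}$ where each $K_x$ has a DFA with $\le s$ states. A counting argument shows $\mathcal{C}_{s,k}$ contains about $s^{ks}$ languages, so choosing $d\approx s-2\log_k(m/\log_2 k)$ leaves enough room. This matches the form of the exponent: the missing factor $k^{2\log_k(\cdot)}$ is where $n\log_2 k\approx m$ lives.

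Bob simulates the learner on his infinite continuation from Alice's memory state. Because the learner is deterministic and converges, Bob can compute the eventual hypothesis $L$ by reasoning over the finite automaton of memory states: $2^{m}$ configurations, a periodic continuation, and an unbounded-computation Bob. We have $L\subseteq K_x$, and $L$ misses at most $\Delta$ strings. Distinct $x,x'$ in a family whose excluded sets differ on more than $2\Delta$ positions cannot yield the same eventual hypothesis, since $L$ would sit in $K_x\cap K_{x'}$ and miss more than $\Delta$ of one of them. Hence the number of memory states must be at least the size of a $2\Delta$-separated code in $\mathcal{X}$. A Gilbert--Varshamov bound gives $\log|\text{code}|\gtrsim (k^{d}/\Delta)\log_2 k$ roughly, so $m+1\ge$ that, and rearranging yields $\Delta\ge k^{s-2\log_k((m+1)/\log_2 k)-3}$.

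The main obstacle is reconciling the DFA state bound with the richness needed. The excluded sets must come from languages with $\le s$ states while still forming a large separated code. I would first try exclusions of the form $u\,x(u)$ where $u$ ranges over a structured set such as a product $\Sigma^{a}\times\{\text{fixed}\}$, realizable by a layered DFA whose width is traded against depth. This is where the factor 2 in the exponent arises.
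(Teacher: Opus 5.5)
\textbf{There is a genuine gap: your committed reduction cannot give this bound, and the obstruction is structural, not a matter of parameters.}

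Bob's continuation (all strings of length $\neq d+1$) does not depend on his index. So the whole stream enumerates $K_x$, a target determined by all of Alice's input, and the learner's guarantee applies only when $K_x\in\mathcal{C}_{s,k}$. Your argument is then a one-party pigeonhole. Equal memory states after Alice's prefix give equal limiting hypotheses, hence $|K_x\triangle K_{x'}|\le 2\Delta$, so $2^{m}$ must be at least the size of a $2\Delta$-separated family of targets.

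But the members of that family are distinct languages in $\mathcal{C}_{s,k}$, of which there are only $2^{O(ks\log s)}$. So no choice of $\mathcal{X}$ or of code can certify more than $m=O(ks\log s)$ bits. For a learner with, say, $m=\mathrm{poly}(s,k)$ bits, your argument says nothing, whereas the lemma forces $\Delta\ge k^{s-O(\log_k(sk))}$.

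You correctly observe that the unrestricted $K_x$ needs about $k^{d}$ states. However, shrinking the set of admissible $x$ is exactly what destroys the information content. The Gilbert--Varshamov estimate over such a structured family and the final ``rearranging'' step are not derived, and cannot be. Note also that Bob's index plays no role in your protocol, so the $\textnormal{INDEX}_n$ bound is never actually invoked.

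\textbf{The missing idea is to let the stream carry all of $x$ while the target depends only on $(i,x_i)$.}

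The paper partitions the strings $1\circ\textnormal{enc}_k(j)\circ u$ with $u\in\Sigma^{\ell}$ into $kn$ blocks $L_{ik+q}$ of $k^{\ell}$ strings each, using $p=1+\log_k n$ code symbols. Alice streams every block $(j,q)$ with $q\neq x_j$ and sends her memory state. Bob streams every block $(j,q)$ with $j\neq i$, and then all strings beginning with $0$.

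The union is $\textnormal{Base}(i,x_i)$, which omits only block $(i,x_i)$. It has a DFA with $2p+\ell+5$ states: two length-$p$ chains tracking whether the code word equals $\textnormal{enc}_k(ik+x_i)$, followed by a counter over $\ell$ free symbols. There are only $kn$ such targets, all small.

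The $\ell$ padding symbols supply the amplification your design lacks. If $\Delta<k^{\ell}$, the limiting hypothesis $H$ must meet every block $(i,q)$ with $q\neq x_i$. Meanwhile $L(H)\subseteq\textnormal{Base}(i,x_i)$ forces it to miss block $(i,x_i)$. So Bob decodes $x_i$ as the unique $q$ with $L(H)\cap L_{ik+q}=\emptyset$. Because Alice's single message must serve every $i$, the $\textnormal{INDEX}_n$ bound gives $\Delta\ge k^{\ell}$ whenever $m\le n\log_2 k-1$, with $\ell=s-2p-5$.

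The factor $2$ in the exponent comes from the two length-$p$ chains, not from a width/depth trade-off, and no codes are needed. (Carried through literally, $\ell=s-2p-5$ with $p=1+\log_k((m+1)/\log_2 k)$ gives an additive constant of $-7$ rather than the stated $-3$ in the exponent. This does not affect the asymptotic tradeoff.)
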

\newcommand{\map}{\textnormal{map}}
\newcommand{\Base}{\textnormal{Base}}
\begin{proof}
Fix $s$ and let $m(s)$ be the number of bits that algorithm $\mathcal{A}$ uses.
As we discussed above our proof goes through the $\textnormal{INDEX}_n$ problem. We will 
set $n$, i.e. the input size of $\textnormal{INDEX}_n$, later in the proof as a function of
$m(s)$ and $k$.

\paragraph{Alphabet and building blocks.} Let $\Sigma \coloneqq [k]$ be an alphabet of size $k$. 
Throughout the proof we use $[k] = \{0,1,\dots,k-1\}$. We define $U$ to be the set of all strings that start with a $1$ and continue with some finite number of symbols from $\Sigma$.
We use the first, say $p$, symbols after $1$ to encode a pair $(i,q)$ for all $i \in [n]$
and all $q \in \Sigma$, and we  partition the universe $U$ into subsets based on these $p$ symbols. Then, every string will continue with some free symbols, say $\ell$ of them, which means that
each of the subsets corresponding to a pair $(i,q)$ has $k^\ell$ strings. 
Since there are $kn$ different pairs $(i,q)$ and we are encoding in base-$k$, we need $p \coloneqq 1 + \log_k n$ symbols to encode them.
Formally, for each
$j \in \{0,1,\dots, kn-1\}$, let $\textnormal{enc}_k(j) \in \Sigma^p$ denote the base-$k$ encoding of $j$ with length $p$ (allowing leading zeros). We define the languages $L_0, \dots, L_{kn-1}$, which partition $U$, as
$$
L_j \coloneqq \{ 1\ \circ\ \textnormal{enc}_k(j)\ \circ\ u\ |\ u \in \Sigma^\ell\}.
$$
Every language above has size $k^\ell$, as they have exactly one string for every member of $\Sigma^\ell$.
We create a bijective mapping, $\map: [n] \times [k] \to [kn]$, that maps every combination of a dimension $i \in [n]$ and symbol $q \in [k]$ to a unique language in the above partition. Any such mapping can work for our reduction, for simplicity let $\text{map}(i,q) \coloneqq i\cdot k+ q$.

Furthermore, for every $i \in [n]$ and every $q \in [k]$ we define the language $\Base(i,q)$ to contain all above partitions of $U$ except for $L_{\map(i,q)}$. For technical reasons that will become clear later, we also make $\Base(i,q)$ include an infinite component, namely all strings that start with a $0$. Formally,
$$
\Base(i,q) \coloneqq \{0 \circ u\ |\ u \in \Sigma^* \}\ \cup\ \{1 \circ u\ |\ u \in \Sigma^{p+\ell}\} \setminus L_{\textnormal{\map}(i,q)}.
$$
Notice that all partitions of $U$, i.e. $L_0, \dots, L_{kn-1}$, can be recognized by a DFA with at most $p+\ell+2$ states, which simply recognizes the encoding bits and then counts $\ell$ more characters. Also, for each language $\Base(i,q)$, we can construct a DFA that recognizes it as follows:
\begin{itemize}[itemsep=2pt,topsep=2pt,parsep=0pt,partopsep=0pt]
    \item If the first symbol is $0$, transition to an accepting state that self-loops on all symbols in $\Sigma$ (this branch accepts $\{0 \circ u\ |\ u \in \Sigma^*\})$.
    \item If the first symbol is $1$, the DFA needs to accept exactly $U \setminus L_{\map(i,q)}$.
    Equivalently, among strings of length $1+p+\ell$ that start with $1$, it rejects exactly those whose
    next $p$ symbols encode the number $\map(i,q)$, i.e. $\textnormal{enc}_k(\map(i,q))$. This can
    easily be done by $2$ chains of length $p$ that track whether or not the sequence encodes the target number. After this part, the chain that recognized this sequence leads to a rejecting state, while the chain that did not recognize the sequence reads $\ell$ more symbols and accepts.
\end{itemize}

\noindent In Appendix~\ref{sec:appendix-lower-bounds}, we give a figure (\Cref{fig:dfa_accepting_base}) that shows the above construction. Such a DFA uses at most $2p+\ell+5$ states. We now set $s\coloneqq 2p+\ell+5$ so that all above languages, i.e.
all $L_{\map(i,q)}$ and $\Base(i,q)$ for $i \in [n]$ and $q \in \Sigma$, are recognized by DFAs of size at most $s$.

\paragraph{Reduction from $\textnormal{INDEX}_n$.}
We now build a deterministic one-way protocol for $\textnormal{INDEX}_n$ using algorithm $\mathcal{A}$. As a reminder, Alice receives $x = (x_0, x_1, \dots, x_{n-1}) \in \Sigma^n$, while Bob receives $i \in [n]$ and must output $x_i$. Alice simulates $\mathcal{A}$ on a finite input stream consisting of all the strings
in the languages $L_{\map(j,q)}$ for every coordinate $j \in [n]$ and every symbol $q \neq x_j$. Conceptually, the only partitions of $U$ that are \textbf{not} fed to the algorithm are exactly the combinations $(j,x_j)$ of Alice's input. Formally, she feeds $\mathcal{A}$ the set
$$
S_A = \bigcup_{j \in [n]} \bigcup_{q \in \Sigma \setminus \{x_j\}} L_{\map(j,q)}.
$$
Let $\sigma$ be the internal memory state of $\mathcal{A}$ after reading this prefix.
Alice sends the state $\sigma$ to Bob, which by definition is at most $m(s)$ bits. Bob resumes the execution of $\mathcal{A}$ from state $\sigma$ and continues by feeding it all the strings in the languages
$L_{\map(j,q)}$ for $j \neq i$, i.e., the set

$$
S_B = \bigcup_{\substack{j \in [n], \\ j \neq i}} \bigcup_{q \in \Sigma} L_{\map(j,q)}.
$$

\noindent Notice that at this point, the learner $\mathcal{A}$ has seen all partitions of $U$ except for the one corresponding to $(i,x_i)$. After feeding these strings, Bob continues running the algorithm $\mathcal{A}$ and gives it strings from $\{0 \circ u\ |\ u \in \Sigma^*\}$, via some surjective enumeration. This part is simply done so that the
algorithm $\mathcal{A}$ can keep running until it converges to an answer.

Consider now the union of the strings presented to $\mathcal{A}$, i.e. $S_A \cup S_B \cup \{0 \circ u\ |\ u \in \Sigma^*\}$. For all $j \neq i$, Bob has given $\mathcal{A}$ all the strings in the languages $L_{\map(j,q)}$. Furthermore, for the index $j = i$, Alice has given $\mathcal{A}$
all the strings in the languages $L_{\map(i,q)}$ for all $q \neq x_i$. Thus, the algorithm has seen all languages $L_{\map(j,q)}$ except for $L_{\map(i,x_i)}$. Formally, the stream is a surjective
enumeration of
$$
S_A \cup S_B \cup \{0 \circ u\ |\ u \in \Sigma^*\} = \{0 \circ u\ |\ u \in \Sigma^*\}\ \cup\ \{1 \circ u\ |\ u \in \Sigma^{p+\ell}\} \setminus L_{\textnormal{\map}(i,x_i)} = \Base(i,x_i).
$$
\textbf{Decoding $x_i$ from algorithm $\mathcal{A}$.} By the definition of generation in the limit,
with gap $\Delta_{\mathcal{A}}(s,k)$, applied to $L^* = \Base(i,x_i)$, the limiting output of $\mathcal{A}$, let it be $H$, satisfies
\[
L(H)\subseteq \mathrm{Base}(i,x_i)
\qquad\text{and}\qquad
\bigl|\mathrm{Base}(i,x_i)\setminus L(H)\bigr|\le \Delta_\mathcal{A}(s,k).
\]
Let's assume at this point that the generation gap of the learner $\mathcal{A}$ is 
$\Delta_{\mathcal{A}}(s,k) < k^\ell$.
Then, for any $q \neq x_i$, if $L(H) \cap L_{\map(i,q)} = \emptyset$, then $\mathcal{A}$ would miss all $k^\ell$ strings in $L_{\map(i,q)} \subseteq \Base(i,x_i)$, which would mean that its generation gap
is at least $|\Base(i,x_i) \setminus L(H)| \geq k^\ell$, contradicting our assumption. Therefore, for every $q \neq x_i$ we must have
$
L(H) \cap L_{\map(i,q)} \neq \emptyset.
$
On the other hand, $L_{\map(i,x_i)}$ is disjoint from $\Base(i,x_i)$, and since
$L(H) \subseteq \Base(i,x_i)$, it holds that
$
L(H) \cap L_{\map(i,x_i)} = \emptyset.
$

As a result, $x_i$ is the unique symbol $q \in \Sigma$ such that $L(H) \cap L_{\map(i,q)} = \emptyset$.
Bob can, thus, determine $x_i$ by testing the emptiness of the intersection of $L(H)$ with the languages $L_{\map(i,q)}$ (e.g. by constructing the product DFAs and checking if there is any path from the starting state to an accepting state). Notice that Bob does not necessarily need to know when the learner $\mathcal{A}$ has converged. The reduction from $\textnormal{INDEX}_n$ is purely information-theoretic and therefore it suffices that
there exists a $t^*$ after which Bob could test the above conditions and recover $x_i$. See~\Cref{sec:appendix-lower-bounds} for further discussion.

We have, therefore, constructed a deterministic one-way protocol for $\textnormal{INDEX}_n$
in which Alice's message $\sigma$ has at most $m(s)$ bits. Choosing $m(s) = n\log_2k-1$
leads to a contradiction of the $\textnormal{INDEX}_n$ lower bound (\Cref{lem:index-lb}),
which means that our assumption of $\Delta_{\mathcal{A}}(s,k) < k^\ell$ is false
in the case where $m(s) = n\log_2k -1$.
Recall that we used $s = 2p+\ell+5$ and $p = 1 + \log_kn$, which yields
$\ell = s - 2p - 5$ and $p = 1 + \log_k((m(s)+1)/\log_2k)$. The proof of the lemma is concluded by combining the former identities
$
\Delta_\mathcal{A}(s,k) \geq k^\ell = k^{s - 2\log_k((m(s)+1)/\log_2k)-3}.
$
\end{proof}
Finally, the main result of this section, \Cref{thm:intro-main-lower-bound}, follows
from \Cref{lem:space-lower-bound-analytical}, as in the case where $\Delta_{\mathcal{A}}(s,k) \leq k^{(1-\epsilon)s}$, we have that $2\log_k((m(s)+1)/\log_2k) \geq \varepsilon s - 3$, i.e. $m(s) = k^{\varepsilon s/2 - 3 + \log_k(\log_2 k)} = k^{\Omega(\varepsilon s)}$.

%% file: Sections/Conclusion.tex
\section{Conclusion}
This paper develops a space-efficient model of generation in the limit for the DFA family $\mathcal{C}_{s,k}$ and proves a tight memory–breadth tradeoff: $\mathrm{poly}(s,k)$ space guarantees hallucination-free generation up to an exponential gap, while achieving substantially smaller gaps (and, in particular, exact identification) forces exponential memory.

Several directions remain open, including extending the framework to non-uniform learners, and developing analogous tradeoffs for other resources such as time and sample complexities.
\label{sec:conclusion}

%% file: Appendix_Sections/Communicative_Languages.tex
\section{Resource-Bound Communication}
\label{app:communicative}

Here, we expand on our point from \Cref{sec:introduction} that space-efficient learning in the limit admits a meaningful formulation only over finite hypothesis spaces of regular languages. We also argue that this language-acquisition regime captures the most relevant and informative setting.

\paragraph{Space-Efficient Generation.} Under our proposed framework, the learner $\A$ receives an adversarial enumeration $w_1, w_2, w_3, \dots$ of a target language $K$ through an online streaming interface. Each string $w_t \in K$ arrives one symbol at a time, with a designated delimiter separating consecutive strings. After each read symbol, the learning algorithm can perform computation and storage within a specified memory budget, dependent on the complexity of the target language. Furthermore, after processing each $w_t$, the learner outputs a hypothesis representation. The generation objective requires that, after some finite time, every subsequent output of $\A$ must represent a language $L \subseteq K$.

As a consequence of the Church-Turing Thesis \citep{sep-church-turing}, we can model space-bounded learners as \textit{streaming Turing machines} (STM) \citep{aggarwal2007data} with bounded-size work tapes. An STM features a one-way read-only input and output tapes and several bidirectional work tapes. Clearly, a space-bounded STM admits only finitely many reachable configurations, so any decision procedure implemented by such a model necessarily recognizes a regular language. Now, the space-bounded learner $\A$ who searches for the target $K$ within a hypothesis collection $\C_\A$ should have the ability to verify whether a given input stream $w \in K$ belongs to the current hypothesis $L \in \C_\A$. Hence, since the finite-memory STM $\A$ decides membership queries for every hypothesis language, $\C_\A$ must contain only regular languages. Moreover, since $\A$ can occupy finitely many internal configurations, $\A$ can only decide membership for finitely many languages, leading to $|C_\A| < \infty$. Thus, space-efficient generation in the limit naturally confines analysis to finite collections of regular hypotheses.  

Although the restriction to finite regular hypothesis spaces may appear limiting, this language-acquisition regime provides the central case for understanding realistic learners. In particular, we demonstrate that the collection $\mathcal{K}$ of \textit{communicative languages}, which arise from the interaction of space-bounded agents such as humans \citep{miller1963finitary}, forms a strict subset of regular languages.

\paragraph{Regularity of Communicative Languages.}
We now show from the dual perspectives of space-bounded language users and learners that the set of communicative languages $\mathcal{K}$ can only contain regular languages.

First, we posit that a competent user $U$ of a language $K \in \mathcal{K}$ has the ability to decide whether $w \in K$ for any input stream $w \in \SS$. Together with the finite memory assumption for $U$, this assertion implies the regularity of $K$. A potential objection arises from Chomsky’s classic observation that the possibility of unbounded center embedding makes English at least context-free \citep{chomsky1956three}. However, in practice, an English speaker will need access to a pen and paper to reliably judge the grammaticality of deeply center-embedded sentences such as
“The mouse that the cat that the dog chased bit ran.” Indeed, for any input stream $w \in \SS$, humans respond in one of three ways: (i) accept $w$ as grammatical, (ii) reject $w$ as ungrammatical, or (iii) determine that processing $w$ requires additional resources and reject $w$ as an unnatural conversational sentence. In this sense, the set of grammatically well-formed English strings $\bar{K}$ might sit high in the Chomsky hierarchy, but the communicative English sub-language $K \subset \bar{K}$ remains regular.

Second, we posit that any communicative language $K \in \mathcal{K}$ admits learning by a space-bounded learner $\A$. Indeed, human infants acquire grammar without external memory aids, just as large transformers successfully train with finite context windows. Hence, as argued above, the hypothesis space $\C_\A$ contains finitely many regular languages -- a conclusion corroborated by the search spaces of LLMs.\footnote{Transformers, RNNs, and related NTP architectures behave like large probabilistic deterministic finite-state automata \citep{svete2023recurrent, svete2024transformers, merrill2024illusion} and therefore generate from a regular support.} Therefore, due to the realizability assumption $K \in \C_\A$, $K$ becomes regular.

Importantly, our arguments for regularity do not imply that communicative languages have finite size or strings of bounded length. Throughout, we model strings as concatenations of grammatically well-formed sentences similar to token sequences produced by LLMs, which may consist of many sentences terminated by an end-of-sequence marker. Under this interpretation, a communicative language may contain strings of unbounded length, provided that those strings remain locally parseable by a finite-memory process. As an analogy, a proficient human reader can judge an entire book as grammatically correct while using only bounded working memory. Finally, we emphasize that we treat languages purely as sets of grammatically well-formed strings and deliberately ignore semantic verification, which may require substantially greater resources.

%% file: Appendix_Sections/DFA_Stuff.tex
\section{Standard DFA Results}
\label{sec:appendix-dfa-facts}

In this section, we briefly state some foundational and some more recent results about DFAs. For a more detailed presentation, we refer the reader to the standard books on Automata Theory \citep{sipser1996introduction, hopcroft2001introduction, esparza2023automata}.

For a DFA $A = (Q, \Sigma, \d, q_0, F)$, we denote by $L(A) = \{w \in \SS : \d(q_0, w) \in F\}$ the regular language accepted by $A$ and let
\begin{equation*}
    L_A^-(q) = \{w \in \SS : \d(q_0, w) = q \} \quad \text{ and } \quad L_A^+(q) = \{w \ \in \SS : \d(q, w) \in F\}
\end{equation*}
stand for the left (or prefix) and right (or residual) languages of state $q \in Q$. Throughout the paper, we use the term \textit{automaton} to refer to a DFA.

Regular languages possess an interesting algebraic property described by the Myhill-Nerode theorem \citep[Chapter 2.4]{esparza2023automata}. For a subset $L \subseteq \SS$, we can define the Nerode equivalence $\equiv_L$ between strings in $\SS$ such that  $u \equiv_L v \iff \{w \in \SS: u \cdot w \in L\} =  \{w \in \SS: u \cdot w \in L\}$. Now, the theorem states that $L$ is a regular language if and only if $\equiv_L$ splits $\SS$ into a finite number of equivalence classes, indicated as $\vert \!\! \equiv_L \!\! \vert$. Moreover, when $\vert \!\! \equiv_L \!\! \vert < \infty$, there exists a unique minimum-state DFA for $L$ with the following description:
$q_0 = [\eps]_{\equiv_L}, \ Q = \{[w]_{\equiv_L} : w \in \SS\}, \ F = \{[w]_{\equiv_L} : w \in L\}, \ \d([w]{\equiv_L}, \s) = [w \cdot \s]_{\equiv_L}, \en \forall w \in \SS, \s \in \Sigma$.

One can also consider equivalence of DFAs in terms of accepting languages: $A \equiv B \iff L(A) = L(B)$. This equivalence extends to automata states. Indeed, for a DFA $A = (Q, \Sigma, \d, q_0, F)$, we define states $p, q \in Q$ as equivalent ($p \equiv q)$ if $L_A^+(p) = L_A^+(q)$. Given the automaton $A$, Hopcroft's algorithm \citep{hopcroft1971n, gries1973describing} allows us to find the minimal automaton for $L(A)$ in $O(|\Sigma||Q|\log |Q|)$ time by cleverly merging equivalent states. As a consequence of Hopcroft's algorithm, if $p \not\equiv q$, then there exists a distinguishing string $w \in L_A^+(p) \triangle L_A^+(q)$ of length at most $|Q|-1$.

A relaxed variant of equivalence, referred to as hyper-equivalence \citep{badr2009hyper_a, gawrychowski2011minimising, maletti2011optimal}, underpins our results. We define two automata $A$ and $B$ as hyper-equivalent, denoted $A \sim B$, if $L(A)$ and $L(B)$ disagree on finitely many strings: $|L(A) \triangle L(B)| < \infty$. Clearly, $\sim$ is an equivalence relation, which we can again extend to automata states by writing $p \sim q$ if and only if $|L_A^+(p) \triangle L_A^+(q)| < \infty$.

Now, we describe the standard product DFA constructions. In the notation of \Cref{def:dfa}, for two DFAs $C=(Q_1,\Sigma, \delta_1, q^{(1)}_0, F_1)$ and $D=(Q_2,\Sigma, \delta_2, q^{(2)}_0, F_2)$ accepting regular languages $L(C)$ and $L(D)$, the DFAs recognizing $L(C) \cap L(D)$, $L(C) \cup L(D)$ and $L(D) \setminus L(C)$ can be realized as a suitable product automaton $C \times D$. The state space in this automaton is $Q_1 \times Q_2$ (so that the number of states is $|Q_1|\cdot|Q_2|$), and the initial state is the pair $(q^{(1)}_0, q^{(2)}_0)$. The transition function $\delta_1 \times \delta_2$ is defined as
\begin{align*}
    (\delta_1 \times \delta_2)((q_1, q_2), \sigma)=(\delta_1(q_1,\sigma), \delta_2(q_2,\sigma)).
\end{align*}
For $L(C) \cap L(D)$, the accepting states correspond to $(q_1,q_2)$ pairs such that both $q_1 \in F_1$ and $q_2 \in F_2$. For $L(C) \cup L(D)$, the accepting states correspond to $(q_1,q_2)$ pairs such that either $q_1 \in F_1$ or $q_2 \in F_2$. For $L(D) \setminus L(C)$, the accepting states correspond to $(q_1,q_2)$ pairs such that $q_1 \notin F_1$ and $q_2 \in F_2$.

%% file: Appendix_Sections/Appendix_Upper_Bounds.tex
\section{Deferred Proofs from \Cref{sec:upper-bounds}}
\label{sec:appendix-upper-bounds}

\RankIterationSpaceComplexity*
\begin{proof}
    Within \textup{\texttt{rankIteration}}$(s,\prec,j)$, keeping track of $\mathrm{count}$, and the current value of $r$ in the outer for loop requires $\log(N)=\p(s, k)$ space. The inner loop, which traverses the collection of DFAs, also requires keeping track of only $\p(s, k)$ bits of memory. For example, this can be implemented by keeping track of $s$ bits indicating the states present in the DFA and $sk \log s$ bits indicating the transitions at each state and alphabet symbol, followed by elementary checks to ensure that the implied DFA is valid. Thereafter, observe that we can reuse the space required by \texttt{computeRank}$(\cdot, \prec)$ in every iteration of the nested for loop. Thus, the total space complexity of Algorithm \ref{algo:rank-iteration-strategy} is $\p(s, k)$ \textit{plus} the space required to implement \texttt{computeRank}$(\cdot, \prec)$.

    We now argue that \texttt{computeRank}$(\cdot, \prec)$ can be implemented efficiently. Again, the outer for loop requires keeping track of $\ell$ which requires $\log(N)=\p(s,k)$ space, and the inner for loop which traverses the collection of DFAs also requires keeping track of $\p(s,k)$ bits, as argued above. Thereafter, every iteration of the nested for loop can reuse the space required by \texttt{path}$(\cdot,\cdot,\ell, \prec)$. 
    
    It thus remains to argue that \texttt{path}$(\cdot,\cdot,\ell, \prec)$ can be implemented using $\p(s,k)$ space. Here, we realize that the recursive invocation to \texttt{path}$(\cdot,\cdot,\ell-\ceil{\ell/2}, \prec)$ can reuse the space used by the recursive invocation to \texttt{path}$(\cdot,\cdot,\ceil{\ell/2}, \prec)$. Furthermore, each invocation needs to store only $\p(s,k)$ bits corresponding to the values of $C, D$ and $\ell$ it is invoked with before recursing. The recursion depth until the base case is reached is $\log(\ell) \le \log(N)=\p(s,k)$. Finally, the base case for $\ell=1$ requires checking if two DFAs $C$ and $D$ satisfy $C \prec D$, which, by assumption, requires $\p(s,k)$ space. In total, we obtain that \texttt{path}$(\cdot,\cdot,\ell, \prec)$ requires $\p(s,k)$ space.
\end{proof}

\popartialorder*
\begin{proof}
    Recall the definition of $\prec_2$:
    \begin{equation*}
        A \prec_2 A' \iff |L(A) \setminus L(A')| < \infty \text{ and } |L(A) \setminus L(A')| < |L(A') \setminus L(A)|.
    \end{equation*}
    We will argue that $\prec_2$ is irreflexive ($A \nprec_2 A$), asymmetric ($A \prec_2 B \implies B \nprec_2 A$) and transitive ($A \prec_2 B \text{ and } B \prec_2 C \implies A \prec_2 C$).
    \begin{enumerate}
        \item Irreflexivity: $A \nprec_2 A$, since $|L(A) \setminus L(A)|=0$, and $0 \nless 0$.
        \item Asymmetry: Suppose $A \prec_2 B$, which means that $|L(A) \setminus L(B)| < \infty$ and $|L(A) \setminus L(B)| < |L(B) \setminus L(A)|$. Assume for the sake of contradiction that $B \prec_2 A$. This means that $|L(B) \setminus L(A)| < \infty$ and $|L(B) \setminus L(A)| < |L(A) \setminus L(B)|$. But when both $|L(A) \setminus L(B)|$  and $|L(B) \setminus L(A)|$ are finite, $|L(A) \setminus L(B)| < |L(B) \setminus L(A)|$ and $|L(B) \setminus L(A)| < |L(A) \setminus L(B)|$ cannot simultaneously hold.
        \item Transitivity: Suppose $A \prec_2 B$ and $B \prec_2 C$; we want to show that $A \prec_2 C$. Since $A \prec_2 B$ and $B \prec_2 C$, we have that $|L(A) \setminus L(B)| < \infty$ and $|L(B) \setminus L(C)| < \infty$. Then,
        \begin{align*}
            &L(A) \setminus L(C) \subseteq (L(A) \setminus L(B)) \cup (L(B) \setminus L(C)) \\
            \implies\qquad& |L(A) \setminus L(C)| \le |L(A) \setminus L(B)| + |L(B) \setminus L(C)| < \infty. \tag{union bound}
        \end{align*}
        Thus, $|L(A)\setminus L(C)| < \infty$. Now, we wish to further show that $|L(A) \setminus L(C)| < |L(C) \setminus L(A)|$. Towards this, we will use the following key identity, which expresses $L(A) \setminus L(B)$ as a disjoint union:
        \begin{align*}
            &L(A) \setminus L(B) = \left(L(A) \setminus (L(B) \cup L(C))\right) \sqcup \left((L(A) \cap (L(C)) \setminus L(B)\right), \\
            \implies \qquad & |L(A) \setminus L(B)| = \left|L(A) \setminus (L(B) \cup L(C))\right| + \left|(L(A) \cap (L(C)) \setminus L(B)\right|.
        \end{align*}
        Expressing all remaining pairwise set differences in this fashion, we have
        \begin{align*}
            &|L(B) \setminus L(A)| = \left|L(B) \setminus (L(A) \cup L(C))\right| + \left|(L(B) \cap (L(C)) \setminus L(A)\right|, \\
            &|L(B) \setminus L(C)| = \left|L(B) \setminus (L(C) \cup L(A))\right| + \left|(L(B) \cap (L(A)) \setminus L(C)\right|, \\
            &|L(C) \setminus L(B)| = \left|L(C) \setminus (L(B) \cup L(A))\right| + \left|(L(C) \cap (L(A)) \setminus L(B)\right|, \\
            &|L(A) \setminus L(C)| = \left|L(A) \setminus (L(C) \cup L(B))\right| + \left|(L(A) \cap (L(B)) \setminus L(C)\right|, \\
            &|L(C) \setminus L(A)| = \left|L(C) \setminus (L(A) \cup L(B))\right| + \left|(L(C) \cap (L(B)) \setminus L(A)\right|.
        \end{align*}
        Since $A \prec_2 B$ and $B \prec_2 C$, we also have that $|L(A) \setminus L(B)|<|L(B) \setminus L(A)|$ and $|L(B) \setminus L(C)| < |L(C) \setminus L(B)|$. Using the above identities, this means that
        \begin{align}
            &\left|L(A) \setminus (L(B) \cup L(C))\right| + \left|(L(A) \cap (L(C)) \setminus L(B)\right| \nonumber \\
            &\hspace{3cm} < \left|L(B) \setminus (L(A) \cup L(C))\right| + \left|(L(B) \cap (L(C)) \setminus L(A)\right|, \label{eqn:set-diff-1}\\
            &\left|L(B) \setminus (L(C) \cup L(A))\right| + \left|(L(B) \cap (L(A)) \setminus L(C)\right| \nonumber \\
            &\hspace{3cm} < \left|L(C) \setminus (L(B) \cup L(A))\right| + \left|(L(C) \cap (L(A)) \setminus L(B)\right| \label{eqn:set-diff-2}.
        \end{align}
        All the quantities in the LHS of both \eqref{eqn:set-diff-1} and \eqref{eqn:set-diff-2} are finite, since both $|L(A) \setminus L(B)|$ and $|L(B) \setminus L(C)|$ are finite. Furthermore, this also implies that the term $\left|L(B) \setminus (L(A) \cup L(C))\right|$ in the RHS of \eqref{eqn:set-diff-1}, and the term $\left|(L(C) \cap (L(A)) \setminus L(B)\right|$ in the RHS of \eqref{eqn:set-diff-2} are both finite.
        
        Now recall: we wish to show that $|L(A) \setminus L(C)| < |L(C) \setminus L(A)|$. Since we have  shown that $|L(A) \setminus L(C)| < \infty$, if $|L(C) \setminus L(A)| = \infty$, then the inequality already holds. So, suppose that $|L(C) \setminus L(A)| < \infty$. Using the identity for $|L(C) \setminus L(A)|$ above, this means that both $\left|L(C) \setminus (L(A) \cup L(B))\right|$ and $\left|(L(C) \cap (L(B)) \setminus L(A)\right|$ are finite. But then, all the terms in the inequalities \eqref{eqn:set-diff-1} and \eqref{eqn:set-diff-2} are finite. Adding the two inequalities, and canceling the common terms on both sides, we get
        \begin{align}
            &\left|L(A) \setminus (L(C) \cup L(B))\right| + \left|L(A) \cap (L(B) \setminus L(C))\right| \nonumber \\
            &\hspace{3cm}< \left|L(C) \setminus (L(A) \cup L(B))\right| + \left|L(C) \cap (L(B) \setminus L(A))\right| \label{eqn:set-diff-3}.
        \end{align}
        But then, using the identities above one final time, this means that $|L(A) \setminus L(C)| < |L(C) \setminus L(A)|$, which is the desired inequality. Thus, we have shown that $A \prec_2 C$.
    \end{enumerate}
\end{proof}

\linearbound*

\begin{proof}
    The result essentially follows from Lemma 5 in \cite{gawrychowski2011minimising}; we flesh out a detailed proof for completeness.

    Consider a ``disjoint union automaton'' $M$, whose state space $Q_M=Q_A \sqcup Q_B$, so that $|Q_M|=|Q_A|+|Q_B| \le 2s$. In $M$, we retain transitions originally within states in $Q_A$ and $Q_B$, but there are no transitions across states in $Q_A$ and $Q_B$. That is, for any alphabet $\sigma \in \Sigma$, for any $q \in Q_A$, we have $\delta_M(q,\sigma)=\delta_A(q,\sigma)$, and for any $q \in Q_B$, we have $\delta_M(q, \sigma)=\delta_B(q,\sigma)$. $M$ has two distinct initial states $q_{0}^A$ and $q_0^B$, and accepting states $F = F_A \sqcup F_B$. We note that $M$ is not exactly a valid automaton, and we only construct it for the purpose of analysis.

    Let us the define the ``right language'' of a state $q \in M$ as follows:
    \begin{align*}
        L^+_M(q) = \{w \in \Sigma^*: \delta_M(q,w) \in F\}
    \end{align*}
    We immediately have that $L^+_M(q_0^A)=L(A)$ and $L^+_M(q_0^B)=L(B)$. Now, for any $p, q \in Q_M$, define the following (pseudo)distance $d_M(p,q)$:
    \begin{equation}
        \label{eqn:distance-def}
        d_M(p, q) = \min\{\ell \ge 0: L^+_M(p) \cap \Sigma^{\ge \ell} = L^+_M(q) \cap \Sigma^{\ge \ell}\}.
    \end{equation}
    where $\Sigma^{\ge \ell}=\{w \in \Sigma^*:|w| \ge \ell\}$. In words, $d_{M}(p,q)$ is the smallest $\ell$, such that the right languages $L^+_M(p)$ and $L^+_M(q)$, restricted to strings of length at least $\ell$, are equal. We can verify that $d_M(p, q)$ satisfies the following recursive formula:
    \begin{equation}
        \label{eqn:distance-recursive}
        d_M(p, q) = \begin{cases}
            0 & \text{if } L^+_M(p)=L^+_M(q) \\
            1 + \max_{\sigma \in \Sigma}\{d_M\left(\delta_M(p,\sigma),\delta_M(q,\sigma)\right)\} & \text{otherwise}.
        \end{cases}
    \end{equation}
    In fact, $d_M(p,q)$ is a so-called ``ultrapseudometric'' over the states in $Q_M$: we can immediately verify that $d_M(p,p)=0$, $d_M(p,q) \ge 0$ always and $d_{M}(p,q)=d_M(q,p)$; more importantly, $d_M(p,q)$ also satisfies the strong triangle inequality:
    \begin{align}
        \label{eqn:strong-triangle-inequality}
        d_M(p,q) \le \max(d_M(p,r), d_M(r,q)).
    \end{align}
    To see this, note that the inequality holds immediately if either $d_M(p,r)=\infty$ or $d_M(r,q)=\infty$. Otherwise, let $d_M(p,r)=a < \infty$ and $d_M(r,q)=b < \infty$. By definition of $d_M(\cdot,\cdot)$, we have that 
    \begin{align*}
        &L^+_M(p) \cap \Sigma^{\ge a} = L^+_M(r) \cap \Sigma^{\ge a} \quad \text{and} \quad L^+_M(r) \cap \Sigma^{\ge b} = L^+_M(q) \cap \Sigma^{\ge b} \\
        \implies\qquad&L^+_M(p)\cap \Sigma^{\ge \max(a,b)} = L^+_M(q) \cap \Sigma^{\ge \max(a,b)} \\
        \implies\qquad& d_M(p,q) \le \max(a,b) = \max(d_M(p,r), d_M(r,q)),
    \end{align*}
    as required.

    We now claim that, for any two states $p,q \in Q_M$, if $d_M(p,q) < \infty$, then $d_M(p,q) < |Q_M| \le 2s$. Before we go ahead and prove this, let us first see how it implies the lemma. For this, consider $d_M(q^A_0,q^B_0)$. Recall that $L^+_M(q_0^A)=L(A)$ and $L^+_M(q_0^B)=L(B)$, and by assumption, $|L(A) \triangle L(B)| < \infty$. But this necessarily means that $d_M(q^A_0,q^B_0) < \infty$ (otherwise, there would be arbitrarily long strings in the symmetric difference, making it an infinite set). So, by our claim, it holds that $d_M(q^A_0,q^B_0) \le 2s$. By definition of $d_M(\cdot, \cdot)$, this implies that any string in $\Sigma^*$ of length at least $2s-1$ is either in both of $L(A)$ and $L(B)$ or in neither, and hence proves the lemma.

    We now proceed to proving the promised claim: if $d_M(p,q) < \infty$, then $d_M(p,q) < |Q_M|$. Towards this, for any $i \ge 0$, consider the equivalence relation $D_i$ over the set $Q_M$ defined as follows:
    \begin{align}
        \label{eqn:relation-D_i-def}
        (p,q) \in D_i \iff d_M(p,q) \le i.
    \end{align}
    To see that this is an equivalence relation for every $i$, note that $(p,p) \in D_i$ always since $d_M(p,p)=0$ (reflexivity), and also, $(p,q) \in D_i \implies (q,p) \in D_i$ since $d_M(p,q)=d_M(q,p)$ (symmetry). For the transitive property, we use the strong triangle inequality \eqref{eqn:strong-triangle-inequality} from above: if $(p,r) \in D_i$ and $(r,q) \in D_i$, meaning that $d_M(p,r) \le i, d_M(r,q) \le i$, then
    \begin{align*}
        d_M(p,q) \le \max(d_M(p,r), d_M(r,q)) \le i,
    \end{align*}
    meaning that $(p,q) \in D_i$. 

    Now, let $n_i$ be the number of equivalence classes that $D_i$ partitions $Q_M$ into. We have that $n_0 \le |Q_M|$, and every $n_i \ge 1$. Observe now that by definition of $D_i$, it is also the case that $D_0 \subseteq D_1 \subseteq D_2 \subseteq \dots$. This immediately implies that
    \begin{align*}
        n_0 \ge n_1 \ge n_2 \dots.
    \end{align*}
    Since $n_0 \le |Q_M| < \infty$ and each $n_i \ge 1$, there must be some finite $i$ for which $n_i=n_{i+1}$. We now claim that if $n_i = n_{i+1}$ for any $i \ge 0$, then both: (1) for any two states $p,q \in Q_M$, $d_M(p,q) > i \implies d_M(p,q)=\infty$ , and (2) $n_j = n_i$ for every $j > i$.

    For (1), first note that $n_i = n_{i+1}$ implies that $D_i = D_{i+1}$. This follows simply because $D_i \subseteq D_{i+1}$, and both are equivalence relations on $Q_M$. Namely, if there is any pair $(p,q) \in D_{i+1}$ which is not in $D_i$, then this pair would merge two equivalence classes in $D_i$, contradicting $n_i = n_{i+1}$.

    Now, fix $p,q \in Q_M$, and suppose that $d_M(p,q) > i$. We will first argue that $d_M(p,q) \neq i+1$. Suppose not, meaning that $d_M(p,q) = i+1$. Then, by definition, $(p,q) \in D_{i+1}$. But $(p,q) \notin D_i$, contradicting that $D_i = D_{i+1}$. Thus, $d_M(p,q) \neq i+1$.
    
    Now, we argue that $d_M(p,q) \neq j$ for every finite $j > i+1$. Suppose not, meaning that $d_M(p,q) = j$ for some finite $j > i+1$. Since $j > i+1 \ge 1$, by the recursive definition in \eqref{eqn:distance-recursive},
    \begin{align*}
        j = d_M(p,q) = 1 + \max_{\sigma \in \Sigma}\{d_M\left(\delta_M(p,\sigma),\delta_M(q,\sigma)\right)\}.
    \end{align*}
    This means that for $\sigma \in \Sigma$ which is the maximizer above,
    \begin{align*}
        d_M\left(\delta_M(p,\sigma),\delta_M(q,\sigma)\right) = j-1.
    \end{align*}
    Denoting $\delta_M(p,\sigma) := p_1$ and $\delta_M(q,\sigma) := q_1$, we have that $\delta_M(p_1,q_1)=j-1$. Let $t=j-(i+1)$. Since $j$ is finite, $t$ is finite; so repeating the argument above $t$ times, we will obtain that
    \begin{align*}
        \delta_M(p_t, q_t) = j-t = i+1.
    \end{align*}
    But this means that $d_M(p_t,q_t) > i$, which, as we established above, necessarily means that $\delta_M(p_t, q_t) \neq i+1$. This is a contradiction, and hence $d_M(p,q) \neq j$. This proves (1).

    We now turn towards proving (2). Consider any $j > i$. For any two states $p,q \in Q_M$, if $d_M(p,q) \le j  < \infty$, then (1) above implies that $d_M(p,q) \le i$. We also trivially have that $d_M(p,q) \le i \implies d_M(p,q) \le j$. That is, for any two states $p,q \in Q_M$, it holds that
    \begin{align*}
        (p,q) \in D_i \iff d_M(p,q) \le i \iff d_M(p,q) \le j \iff (p,q) \in D_j.
    \end{align*}
    Thus, $D_i=D_j$ and hence $n_j=n_i$ for every $j > i$, which proves (2).

    Finally, putting (2) together with the fact that $|Q_M| \ge n_0 \ge n_1 \ge \dots$, where each $n_i \ge 1$, we necessarily have that by $i=|Q_M|-1$, it holds that $n_i = n_{i+1}$. But then the contrapositive of (1) above implies that for any $p,q \in Q_M$, $d_M(p,q) < \infty \implies d_M(p,q) \le |Q_M|-1$, which completes the entire proof.
\end{proof}

\SpaceEfficientGenerationTheorem*

\begin{proof}
The algorithm $\A$ runs \effiter$(\prec_2)$, which outputs, in the limit, a DFA $A_{i(t)}$ that satisfies $|L({A_{i(t)}}) \triangle K| < \infty$ (by Proposition \ref{prop:po2-converges-to-good-language}). Let $B$ be a DFA that accepts all strings of length at least $2s-1$, and rejects all strings of smaller length; $B$ can be constructed with $2s$ states $q_0,\dots,q_{2s-1}$ where every $q_i$ always transitions to $q_{i+1}$ for $0 \le i < 2s-1$. The initial state is $q_0$, every $q_i$ for $1 \le i < 2s-1$ is a rejecting sink state, and $q_{2s-1}$ is the only accepting sink state. By Lemma \ref{lemma:linear-bound-symmetric-diff}, we have that any $w \in L(B)$ is either in $L(A_{i(t)}) \cap K$ or in $\Sigma^* \setminus (L(A_{i(t)}) \cup K)$.

So, consider the DFA $A'$ which is the product DFA recognizing $L(A_{i(t)}) \cap L(B)$. The DFA $A'$ has at most $s \cdot 2s=2s^2$ states, and satisfies that $L(A') \subseteq L(A_{i(t)}) \cap K \subseteq K$. Furthermore, since any $w \in K \setminus A_{i(t)}$ must necessarily have length smaller than $2s-1$, $K \setminus A_{i(t)} \subseteq K \setminus L(B)$, and hence
\begin{align}
    K \setminus L(A') \subseteq (K \setminus A_{i(t)}) \cup (K \setminus L(B)) = K \setminus L(B). \label{eqn:breadth-equation}
\end{align}
Thus, $L(A')$ is only deficient of the strings in $K$ that have length at most $2s-2$. If $|\Sigma|=k=1$, we can have the algorithm $\A$ additionally keep track of a set $S$ of all the strings seen in the input that have length at most $2s-2$. Since there are at most $2s-1$ such strings (including the empty string), this requires only $\p(s)$ memory. Furthermore, since all the strings in $K$ are eventually revealed, the set $S$ eventually contains all strings in $K$ that have length at most $2s-2$. Finally, since $S$ can be recognized by a DFA having at most $2s-1$ states, the algorithm can output $\A(t)$ to be the product DFA that recognizes $L(A') \cup S$, which has at most $2s^2\cdot (2s-1)=O(s^3)$ states. In this case, we are guaranteed that $L(\A(t))=K$ eventually.

If $k > 1$, we simply have the algorithm output the DFA $\A(t)=A'$. In this case, \eqref{eqn:breadth-equation} gives that $|K \setminus L(A')| \le |K \setminus L(B)| = \frac{k^{2s-1}-1}{k-1} = O(k^{2s-2})$. 

\end{proof}
Let us revisit the trick we used above for the case where $|\Sigma|=1$; here, we had the algorithm keep track of the set $S$ of input strings that had length at most $2s-2$. With a unary alphabet, $S$ could have at most $O(s)$ strings, and hence the algorithm could store all of these in its $\p(s,k)$ memory budget. In fact, this allowed the algorithm to achieve the \textit{stronger} guarantee that $\A(t)=K$; i.e., $\A$ identified the target $K$ in the limit! When $k > 1$, storing the set $S$ requires $\exp(s,k)$ memory, which is why we simply returned $A'$ in this case. Nevertheless, we note that with $\exp(s,k)$ memory, we can use the same trick as in the $|\Sigma|=1$ case, and achieve identification in the limit.

%% file: Appendix_Sections/Appendix_Lower_Bounds.tex
\section{Deferred Proofs from \Cref{sec:lower-bounds}}\label{sec:appendix-lower-bounds}

\begin{proof}[Proof of Lemma~\ref{lem:index-lb}]
       Suppose that a deterministic protocol for $\textnormal{INDEX}_n$ uses $c < n\log_2k$ bits
    of communication. Then, there are at most $2^c < k^n$ possible messages that the 
    protocol can send, meaning
    that there exist two vectors $x, y \in [k]^n$, with $x \neq y$ on which the
    protocol sends the same message. Let $i$ be one coordinate in which $x$ and $y$ differ.
    In both instances of the problem (namely, $(x,i)$ and $(y,i)$), Bob receives the same message and  produces the same answer, even though $x_i \neq y_i$, which contradicts the correctness of the protocol.
\end{proof}

\paragraph{Relaxed symmetric-difference lower bound.}
We also spell out the proof of the second part of \Cref{thm:intro-main-lower-bound}, corresponding to the relaxed objective mentioned after \Cref{def:space-efficient-generation-in-the-limit}. In this formulation, the learner is not required to satisfy $L(\A(t)) \subseteq K$; instead, for every target $K \in \mathcal{C}_{s,k}$ and every surjective enumeration of $K$, the limiting hypothesis must satisfy
\[
    |L(\A(t)) \triangle K| \le \widetilde{\Delta}_{\mathcal{A}}(s,k)
\]
for all sufficiently large $t$.

We use exactly the same $\textnormal{INDEX}_n$ reduction and the same languages as in the proof of \Cref{lem:space-lower-bound-analytical}. Fix Bob's index $i$, and write $B_q := \Base(i,q)$ for $q \in \Sigma$. If Alice's input is $x$, then after Alice's prefix and Bob's continuation, the target language presented to the learner is $B_{x_i}$. Let $H$ be the limiting DFA output by the learner on this stream. Assume for contradiction that
\[
    \widetilde{\Delta}_{\mathcal{A}}(s,k) < k^\ell .
\]
Under the new definition, this means that
\[
    |L(H) \triangle B_{x_i}| < k^\ell .
\]
For any $q \neq x_i$, the two candidate target languages $B_q$ and $B_{x_i}$ differ exactly on the two disjoint blocks $L_{\map(i,q)}$ and $L_{\map(i,x_i)}$. Therefore,
\[
    |B_q \triangle B_{x_i}| = 2k^\ell .
\]
By the triangle inequality for symmetric difference,
\[
    |L(H) \triangle B_q|
    \ge |B_q \triangle B_{x_i}| - |L(H) \triangle B_{x_i}|
    > 2k^\ell - k^\ell
    = k^\ell .
\]
On the other hand, $|L(H) \triangle B_{x_i}| < k^\ell$. Hence $q = x_i$ is the unique minimizer of $|L(H) \triangle \Base(i,q)|$
over $q \in \Sigma$.

Bob can therefore decode $x_i$ by computing, for every $q \in \Sigma$, the size of the regular language $L(H) \triangle \Base(i,q)$, and outputting the unique minimizer. Concretely, Bob constructs the product DFA for this symmetric difference and counts its accepted language when it is finite; an infinite value may be treated as $+\infty$. Under the promise above, the true candidate is finite and uniquely smallest. This gives the same one-way $\textnormal{INDEX}_n$ protocol as in \Cref{lem:space-lower-bound-analytical} with Alice's message equal to the memory state of the learner. The $\textnormal{INDEX}_n$ lower bound is contradicted whenever that memory state has fewer than $n\log_2 k$ bits. Thus $\widetilde{\Delta}_{\mathcal{A}}(s,k) \ge k^\ell$, and substituting the same values of $p,\ell,n$ as in \Cref{lem:space-lower-bound-analytical} gives the same quantitative lower bound.

\paragraph{Further remarks for the Proof of \Cref{lem:space-lower-bound-analytical}.}
As we can see in the proof of $\textnormal{INDEX}_n$ above, the lower bound is purely information-theoretic
and contains no requirements about the computation power of Alice and Bob. This means that even the existence
of two functions, an encoding function of $x$ to $\ell < n \log_2 k$ bits, let it be $f: [k]^n \to [2]^\ell$, and a decoding function $g: [n] \times [2]^\ell \to [k]$ such that for all $x \in [k]^n$ and all $i \in [n]$ it holds that $g(i, f(x)) = x_i$, violates the lower bound. In our construction, you can view $f(x)$ as the memory state $\sigma$
of the learner after Alice has given it her part of the input. Now, let $t^*(i,q)$ be the timestep after which the learner $\mathcal{A}$ has converged to its limiting DFA, on the enumeration of our construction when the target language is $\Base(i,q)$. Let also $T^*$ be the maximum over $t^*(i,q)$ for all $i \in [n]$ and $q \in [k]$.
Then, the decoding function $g$ can be defined as the output of Bob when he waits for $(T^*+1)$ time steps and then performs the corresponding finite DFA test. Of course, the existence of $g$ contradicts the $\textnormal{INDEX}_n$ lower bound.

\begin{figure}[h]
    \centering
    \resizebox{\textwidth}{!}{
\begin{tikzpicture}[
    shorten >=1pt, 
    node distance=1.8cm, 
    on grid, 
    auto, 
    thick,
    initial text={},
    >={Stealth[length=2mm]}
]

    % --- States: Initial & Top ---
    \node[state, initial] (q0) {};
    \node[state, accepting, above right=1.5cm and 1.5cm of q0] (q1) {};
    % Dead state placed in upper-middle to allow back-arcs
    \node[state, right=7.5cm of q1] (dead) {$\perp$};
    
    % --- States: Middle Row (w sequence) ---
    \node[state, below right=1.5cm and 1.5cm of q0] (w0) {};
    \node[state, right=2.4cm of w0] (w1) {};
    \node[state, right=2.4cm of w1] (w2) {};
    \node[state, right=2.4cm of w2] (w3) {};
    \node[draw=none, right=1.2cm of w3] (dots_top) {$\dots$};
    \node[state, right=1.4cm of dots_top] (wp) {};
    
    % --- States: Bottom Row (b sequence / Sink) ---
    \node[state, below=2.5cm of w1] (b1) {};
    \node[state, below=2.5cm of w2] (b2) {};
    \node[state, below=2.5cm of w3] (b3) {};
    \node[draw=none, below=2.5cm of dots_top] (dots_bot) {$\dots$};
    \node[state, below=2.5cm of wp] (bp) {};

    % --- States: Success Sequence (l steps) ---
    \node[state, right=2.2cm of wp] (l1) {};
    \node[draw=none, right=1.2cm of l1] (dots_end) {$\dots$};
    \node[state, accepting, right=1.4cm of dots_end] (q_acc) {};

    % --- Transitions: Initial Split ---
    \path[->] 
        (q0) edge node {0} (q1)
        (q0) edge node [swap] {1} (w0)
        (q0) edge [bend right=10] node [pos=0.3, sloped, above] {$[k] \setminus \{0, 1\}$} (dead)
        (q1) edge [loop above] node {$[k]$} (q1);

    % --- Transitions: Top Row & Diagonals ---
    \path[->] 
        (w0) edge node {$w_1^{(j)}$} (w1)
        (w1) edge node {$w_2^{(j)}$} (w2)
        (w2) edge node {$w_3^{(j)}$} (w3)
        (w3) edge node {} (dots_top)
        (dots_top) edge node {$w_p^{(j)}$} (wp)
        
        % NEW: wp now goes to dead state
        (wp) edge [bend right=25] node [swap, pos=0.2] {$[k]$} (dead)
        
        % Diagonals (Labels higher via pos=0.3)
        (w0) edge node [pos=0.3, left] {$[k] \setminus \{w_1^{(j)}\}$} (b1)
        (w1) edge node [pos=0.3, left] {$[k] \setminus \{w_2^{(j)}\}$} (b2)
        (w2) edge node [pos=0.3, left] {$[k] \setminus \{w_3^{(j)}\}$} (b3);
        % (dots_top) edge node [pos=0.3, left] {$[k] \setminus w_{p-1}^{(j)}$} (bp);

    % --- Transitions: Bottom Row & Success Path ---
    \path[->]
        % Forward arcs on bottom (Labels below via swap)
        (b1) edge node [swap] {$[k]$} (b2)
        (b2) edge node [swap] {$[k]$} (b3)
        (b3) edge node {} (dots_bot)
        (dots_bot) edge node [swap] {$[k]$} (bp)
        
        % NEW: bp now goes to l1 (state after wp)
        (bp) edge [bend right=15] node [swap, pos=0.7] {$[k]$} (l1)
        
        % Final sequence transitions
        (l1) edge node {} (dots_end)
        (dots_end) edge node {$[k]$} (q_acc)
        (q_acc) edge [bend right=30] node [swap, near start] {$[k]$} (dead);

    % --- Brace for l steps ---
    \draw [decorate, decoration={brace, amplitude=10pt, mirror, raise=10pt}]
        (l1.south) -- (q_acc.south) 
        node [midway, yshift=-35pt] {$\ell-1$  steps};

\end{tikzpicture}
}
    \caption{A Deterministic Finite Automaton (DFA) accepting $\textnormal{Base}(j,q)$ for some $j \in [n]$ and $q \in \Sigma$. We use the notation $\textnormal{enc}_k(j) = w_1^{(j)} \circ w_2^{(j)} \circ w_3^{(j)} \circ \cdots \circ w_p^{(j)}$}
    \label{fig:dfa_accepting_base}
\end{figure}

%% file: references.bib
@misc{kleinberg2026languagegenerationlimitbounded,
      title={On Language Generation in the Limit with Bounded Memory}, 
      author={Jon Kleinberg and Anay Mehrotra and Amin Saberi and Grigoris Velegkas},
      year={2026},
      eprint={2605.30324},
      archivePrefix={arXiv},
      primaryClass={cs.DS}, 
}

@article{kleinberg2024language,
  title={Language generation in the limit},
  author={Kleinberg, Jon and Mullainathan, Sendhil},
  journal={Advances in Neural Information Processing Systems},
  volume={37},
  pages={66058--66079},
  year={2024}
}

@article{kalavasis2024characterizations,
  title={Characterizations of language generation with breadth},
  author={Kalavasis, Alkis and Mehrotra, Anay and Velegkas, Grigoris},
  journal={arXiv preprint arXiv:2412.18530},
  pages={3},
  year={2024}
}

@article{papazov2025learning,
  title={Learning Algorithms in the Limit},
  author={Papazov, Hristo and Flammarion, Nicolas},
  journal={arXiv preprint arXiv:2506.15543},
  year={2025}
}

@article{charikar2024exploring,
  title={Exploring facets of language generation in the limit},
  author={Charikar, Moses and Pabbaraju, Chirag},
  journal={arXiv preprint arXiv:2411.15364},
  year={2024}
}

@article{kleinberg2025density,
  title={Density Measures for Language Generation},
  author={Kleinberg, Jon and Wei, Fan},
  journal={arXiv preprint arXiv:2504.14370},
  year={2025}
}

@article{li2024generation,
  title={Generation through the lens of learning theory},
  author={Li, Jiaxun and Raman, Vinod and Tewari, Ambuj},
  journal={arXiv preprint arXiv:2410.13714},
  year={2024}
}

@article{charikar2025pareto,
  title={Pareto-optimal Non-uniform Language Generation},
  author={Charikar, Moses and Pabbaraju, Chirag},
  journal={arXiv preprint arXiv:2510.02795},
  year={2025}
}

@article{raman2025generation,
  title={Generation from Noisy Examples},
  author={Raman, Ananth and Raman, Vinod},
  journal={arXiv preprint arXiv:2501.04179},
  year={2025}
}

@article{peale2025representative,
  title={Representative Language Generation},
  author={Peale, Charlotte and Raman, Vinod and Reingold, Omer},
  journal={arXiv preprint arXiv:2505.21819},
  year={2025}
}

@article{hanneke2025union,
  title={On Union-Closedness of Language Generation},
  author={Hanneke, Steve and Karbasi, Amin and Mehrotra, Anay and Velegkas, Grigoris},
  journal={arXiv preprint arXiv:2506.18642},
  year={2025}
}

@article{charikar2025characterization,
  title={A Characterization of List Language Identification in the Limit},
  author={Charikar, Moses and Pabbaraju, Chirag and Tewari, Ambuj},
  journal={arXiv preprint arXiv:2511.04103},
  year={2025}
}

@article{angluin1980inductive,
  title={Inductive inference of formal languages from positive data},
  author={Angluin, Dana},
  journal={Information and control},
  volume={45},
  number={2},
  pages={117--135},
  year={1980},
  publisher={Elsevier}
}

@article{gold1967language,
  title={Language identification in the limit},
  author={Gold, E Mark},
  journal={Information and control},
  volume={10},
  number={5},
  pages={447--474},
  year={1967},
  publisher={Elsevier}
}

@book{esparza2023automata,
  title={Automata theory: An algorithmic approach},
  author={Esparza, Javier and Blondin, Michael},
  year={2023},
  publisher={MIT Press}
}

@article{sipser1996introduction,
  title={Introduction to the Theory of Computation},
  author={Sipser, Michael},
  journal={ACM Sigact News},
  volume={27},
  number={1},
  pages={27--29},
  year={1996},
  publisher={ACM New York, NY, USA}
}

@article{hopcroft2001introduction,
  title={Introduction to automata theory, languages, and computation},
  author={Hopcroft, John E and Motwani, Rajeev and Ullman, Jeffrey D},
  journal={Acm Sigact News},
  volume={32},
  number={1},
  pages={60--65},
  year={2001},
  publisher={ACM New York, NY, USA}
}

@incollection{hopcroft1971n,
  title={An n log n algorithm for minimizing states in a finite automaton},
  author={Hopcroft, John},
  booktitle={Theory of machines and computations},
  pages={189--196},
  year={1971},
  publisher={Elsevier}
}

@article{gries1973describing,
  title={Describing an algorithm by Hopcroft},
  author={Gries, David},
  journal={Acta Informatica},
  volume={2},
  number={2},
  pages={97--109},
  year={1973},
  publisher={Springer}
}

@article{badr2009hyper_a,
  title={Hyper-minimizing minimized deterministic finite state automata},
  author={Badr, Andrew and Geffert, Viliam and Shipman, Ian},
  journal={RAIRO-Theoretical Informatics and Applications},
  volume={43},
  number={1},
  pages={69--94},
  year={2009},
  publisher={EDP Sciences}
}

@inproceedings{gawrychowski2011minimising,
  title={On minimising automata with errors},
  author={Gawrychowski, Pawel and Jez, Artur and Maletti, Andreas},
  booktitle={International Symposium on Mathematical Foundations of Computer Science},
  pages={327--338},
  year={2011},
  organization={Springer}
}

@article{maletti2011optimal,
  title={Optimal hyper-minimization},
  author={Maletti, Andreas and Quernheim, Daniel},
  journal={International Journal of Foundations of Computer Science},
  volume={22},
  number={08},
  pages={1877--1891},
  year={2011},
  publisher={World Scientific}
}

@article{chomsky1956three,
  title={Three models for the description of language},
  author={Chomsky, Noam},
  journal={IRE Transactions on information theory},
  volume={2},
  number={3},
  pages={113--124},
  year={1956},
  publisher={IEEE}
}

@article{miller1963finitary,
  title={Finitary models of language users},
  author={Miller, George A and Chomsky, Noam},
  year={1963}
}

@article{marcus1993negative,
  title={Negative evidence in language acquisition},
  author={Marcus, Gary F},
  journal={Cognition},
  volume={46},
  number={1},
  pages={53--85},
  year={1993},
  publisher={Elsevier}
}

@article{chouinard2003adult,
  title={Adult reformulations of child errors as negative evidence},
  author={Chouinard, Michelle M and Clark, Eve V},
  journal={Journal of child language},
  volume={30},
  number={3},
  pages={637--669},
  year={2003},
  publisher={Cambridge University Press}
}

@article{svete2023recurrent,
  title={Recurrent neural language models as probabilistic finite-state automata},
  author={Svete, Anej and Cotterell, Ryan},
  journal={arXiv preprint arXiv:2310.05161},
  year={2023}
}

@article{mahowald2024dissociating,
  title={Dissociating language and thought in large language models},
  author={Mahowald, Kyle and Ivanova, Anna A and Blank, Idan A and Kanwisher, Nancy and Tenenbaum, Joshua B and Fedorenko, Evelina},
  journal={Trends in cognitive sciences},
  volume={28},
  number={6},
  pages={517--540},
  year={2024},
  publisher={Elsevier}
}

@article{brown2020language,
  title={Language models are few-shot learners},
  author={Brown, Tom and Mann, Benjamin and Ryder, Nick and Subbiah, Melanie and Kaplan, Jared D and Dhariwal, Prafulla and Neelakantan, Arvind and Shyam, Pranav and Sastry, Girish and Askell, Amanda and others},
  journal={Advances in neural information processing systems},
  volume={33},
  pages={1877--1901},
  year={2020}
}

@article{radford2019language,
  title={Language models are unsupervised multitask learners},
  author={Radford, Alec and Wu, Jeffrey and Child, Rewon and Luan, David and Amodei, Dario and Sutskever, Ilya and others},
  journal={OpenAI blog},
  volume={1},
  number={8},
  pages={9},
  year={2019}
}

@article{savitch1970relationships,
  title={Relationships between nondeterministic and deterministic tape complexities},
  author={Savitch, Walter J},
  journal={Journal of computer and system sciences},
  volume={4},
  number={2},
  pages={177--192},
  year={1970},
  publisher={Elsevier}
}

@article{merrill2024illusion,
  title={The illusion of state in state-space models},
  author={Merrill, William and Petty, Jackson and Sabharwal, Ashish},
  journal={arXiv preprint arXiv:2404.08819},
  year={2024}
}

@article{svete2024transformers,
  title={Transformers Can Represent $ n $-gram Language Models},
  author={Svete, Anej and Cotterell, Ryan},
  journal={arXiv preprint arXiv:2404.14994},
  year={2024}
}

@InCollection{sep-church-turing,
    author       =	{Copeland, B. Jack},
    title        =	{{The Church-Turing Thesis}},
    booktitle    =	{The {Stanford} Encyclopedia of Philosophy},
    editor       =	{Edward N. Zalta and Uri Nodelman},
    year         =	{2026},
    edition      =	{{S}pring 2026},
    publisher    =	{Metaphysics Research Lab, Stanford University}
    }

@article{valiant1984theory,
  title={A theory of the learnable},
  author={Valiant, Leslie G},
  journal={Communications of the ACM},
  volume={27},
  number={11},
  pages={1134--1142},
  year={1984},
  publisher={ACM New York, NY, USA}
}

@book{kearns1994introduction,
  title={An introduction to computational learning theory},
  author={Kearns, Michael J and Vazirani, Umesh},
  year={1994},
  publisher={MIT press}
}

@book{aggarwal2007data,
  title={Data streams: models and algorithms},
  author={Aggarwal, Charu C},
  volume={31},
  year={2007},
  publisher={Springer Science \& Business Media}
}

@inproceedings{bar-yossef2002coco,
  author       = {Ziv Bar{-}Yossef and
                  T. S. Jayram and
                  Ravi Kumar and
                  D. Sivakumar},
  title        = {Information Theory Methods in Communication Complexity},
  booktitle    = {Proceedings of the 17th Annual {IEEE} Conference on Computational
                  Complexity},
  pages        = {93--102},
  publisher    = {{IEEE} Computer Society},
  year         = {2002},
  doi          = {10.1109/CCC.2002.1004344}
}

@article{jayram2008toc,
  author       = {T. S. Jayram and
                  Ravi Kumar and
                  D. Sivakumar},
  title        = {The One-Way Communication Complexity of Hamming Distance},
  journal      = {Theory Comput.},
  volume       = {4},
  number       = {1},
  pages        = {129--135},
  year         = {2008},
  doi          = {10.4086/TOC.2008.V004A006}
}

@incollection{kushilevitz1997communication,
  title={Communication complexity},
  author={Kushilevitz, Eyal},
  booktitle={Advances in Computers},
  volume={44},
  pages={331--360},
  year={1997},
  publisher={Elsevier}
}
